\let\mathnumsetfont\mathbb
\newcommand\Rset{\mathnumsetfont R} 
\newtheorem{teo*}{Theorem}
\newtheorem{prop}{Proposition}
\def\barF{\overline{F}}
\begin{document}

\title{Extensions of the Burr Type XII distribution and Applications}
\author{Meitner Cadena \\
DECE, Universidad de las Fuerzas Armadas - ESPE \\
Sangolqu\'{\i}, Ecuador
}
\maketitle

\begin{abstract}
The Burr type XII (BXII) distribution has been largely used in different fields due to its great flexibility for fitting data.
These applications have typically involved data showing heavy-tailed behaviors.
In order to give more flexibility to the BXII distribution, in this paper, modifications to this distribution through the use of parametric functions are introduced.
For instance, members of this new family of distributions allow the analysis not only of data containing extreme values as the BXII distribution, but also of light-tailed data.
We refer to this new family of distributions as the extended Burr Type XII distribution (EBXIID) family.
Statistical properties of members of the EBXIID family are discussed.
The maximum likelihood method is proposed for estimating model parameters.
The performance of the new family of distributions is studied using simulations.
Applications of the new models to real data sets coming from different domains show that models of the EBXIID family are an alternative to other known distributions.

\begin{center}
Key words: Burr Type XII distribution, Maximum likelihood method, Simulation
\end{center}
\end{abstract}

\section{Introduction}

The called Burr Type XII (BXII) distribution belonging to a set of distributions introduced by Burr in 1942 \cite{Burr1942} has been largely studied due to the skill for fitting almost any given set of uni-modal data, see for instance \cite{Burr1942,Hatke1949,Burr1973,Rodriguez1977,LewisThesis}.
Further, an interesting feature of this distribution is its relation with other well-known distributions as the Lomax, compound Weibull, 
Weibull-exponential, logistic, log-logistic, and Weibull distributions and the Kappa family of distributions, see e.g. \cite{ZimmerKeatsWang1998,Tadikamalla1980}, which shows its wide flexibility for fitting different kinds of data.
With respect to applications, this distribution has been used in a variety of fields thanks to its coverage of a wide range of skewness and kurtosis values, for instance survival analysis, hydrology, environment and reliability \cite{WangCheng2010,SilvaOrtegaPaula2011}.

The survival function (sf) of the BXII distribution $F_B$ is the following two-parameter function:
\begin{equation}\label{burr}
\barF_B(x):=1-F_B(x) = \big(1+x^b\big)^{-a},\quad x>0,
\end{equation}
for some $a>0$ and $b>0$.
A common extension of its original formulation incorporates location and scale parameters, $\mu>0$ and $\sigma>0$ respectively, giving the sf
\begin{equation}\label{burrExt}
\barF_B(x) = \left(1+\left(\frac{x-\mu}{\sigma}\right)^b\right)^{-a},\quad x>\mu.
\end{equation}
The sfs defined as in (\ref{burr}) as well in (\ref{burrExt}) are regularly varying, i.e. they satisfy \cite{Karamata1930,Seneta,deHaan}, for some $\alpha\in\Rset$ called the tail index and for all $t>0$,
\begin{equation}\label{rv}
\lim_{x\to\infty}\frac{\barF_B(tx)}{\barF_B(x)}=t^{-\alpha}.
\end{equation}
Straightforward computations give that the tail index of both $\barF_B$ defined in (\ref{burr}) and in (\ref{burrExt}) is $ab$.
Note that is commonly mentioned that the shape parameters of $F_B$ are $a$ and $b$, see e.g. \cite{LioTsaiWu2010,DogruArslan2015,GanoraLaio2015}, but it is also $ab$, its tail index.
This result on the shape parameter of the BXII distribution shows that this distribution has a heavy-tailed tail, so this distribution may be used for modeling extreme values \cite{BinghamGoldieTeugels,resnick,deHaanFerreira}.

Furthermore, the BXII distribution has been used in combination with other distributions to generate new distributions.
Examples of this type of distributions are: 
the generalized log-Burr family, e.g. \cite{Lawless}, the Kumaraswamy Burr XII distribution introduced by Paranaiba et al. 
\cite{ParanaibaOrtegaCordeirodePascoa2013}, the Weibull Burr XII distribution introduced by Paranaiba et al. 
\cite{AfifyCordeiroOrtegaYousofButt2016}, and the McBXII distribution proposed by Gomes et al. \cite{GomesdaSilvaCordeiro2015}.

In this paper, (\ref{burr}) and (\ref{burrExt}) are extended in order to give them more flexibility for fitting data.
These modifications are done using combinations of appropriated functions and imposing suitable conditions to guarantee that the resulting functions be distributions.
Note that this configuration allows the introduction of parameters disposed in different ways, that will be adapted to the fit to given data.
This ability for fitting different kinds of data is analyzed using real data, namely by evaluating their performance with respect to competitors.
We refer to this wide distribution family as the extended Burr Type XII distribution (EBXIID) family.

The aim of this paper is two-fold.
First, to study statistical properties of members of the EBXIID family and methods for estimating the parameters of these members.
Second, to provide empirical evidence on the great flexibility of members of this family to fit real data from different domains.
For numerical assessments, implementations of these models were done using functions in the R software \cite{RCoreTeam}.

In the next section the EBXIID family is introduced and the members of this family to be studied are presented.
For these distributions conditions to guarantee that themselves are distributions are discussed.
Section \ref{sec2} presents statistical properties of these distributions.
Section \ref{sec3} is devoted to the maximum likelihood method for estimating the parameters of these distributions.
In Section \ref{sec4}, the performance of the parameter estimation method is studied using simulations.
Section \ref{sec5} shows applications of the new distributions to real data sets coming from different domains.
Section \ref{sec6} concludes the paper presenting discussions and conclusions and next further steps.
Proofs are presented in Annexe \ref{Proofs}.

\section{The EBXIID Family}

The cumulative distribution function (cdf) of the members of the EBXIID family is defined as follows.
\begin{equation}\label{defF}
F(x):=1-\big(1+g(x)\big)^{-a}, \quad x>d,
\end{equation}
where $a>0$, $d\in\Rset$ that depends on $g$, and $g$ is a positive, differentiable function defined on $(d,\infty)$, and satisfying $g'(x)\geq0$ and $g(x)\to\infty$ as $x\to\infty$.

The following result guarantees that $\barF:=1-F$ with $F$ defined in (\ref{defF}) is a sf.
\begin{prop}\label{prop:20170220}
$\barF$ with $F$ defined in (\ref{defF}) is a sf.
\end{prop}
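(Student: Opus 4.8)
The plan is to verify directly that $\barF(x)=\big(1+g(x)\big)^{-a}$ satisfies the defining properties of a survival function on $(d,\infty)$: it should take values in $[0,1]$, be non-increasing, be right-continuous, and tend to $0$ as $x\to\infty$ (and, for a proper distribution, approach $1$ at the left endpoint). Since each of these reduces to an elementary property of $g$, the argument will be a sequence of short checks rather than a single hard step.

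First I would record that $g(x)>0$ on $(d,\infty)$ forces $1+g(x)>1$, so that with $a>0$ one has $0<\barF(x)<1$; this gives the range. For monotonicity I would differentiate: writing $\barF'(x)=-a\,g'(x)\big(1+g(x)\big)^{-a-1}$ and using $a>0$, $1+g(x)>0$, and the hypothesis $g'(x)\geq0$, every factor has a definite sign and $\barF'(x)\leq0$, so $\barF$ is non-increasing (equivalently $F$ is non-decreasing). Because $g$ is differentiable, it is in particular continuous on $(d,\infty)$, and hence so is $\barF$ as a composition of continuous maps; this yields right-continuity for free.

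Next I would treat the two endpoints. As $x\to\infty$ the hypothesis $g(x)\to\infty$ gives $\big(1+g(x)\big)^{-a}\to0$ since $a>0$, so $\barF(x)\to0$ and $F(x)\to1$. At the left endpoint, as $x\to d^{+}$ the monotone function $g$ has a limit $L:=\inf_{x>d}g(x)\geq0$, whence $\barF(x)\to(1+L)^{-a}$.

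The main point to watch is exactly this left-endpoint limit. To obtain a proper (non-defective) distribution with support starting at $d$ one needs $L=0$, so that $\barF(d^{+})=1$; this holds in the motivating cases, e.g. $g(x)=x^{b}$ with $d=0$ in \eqref{burr} and $g(x)=\big((x-\mu)/\sigma\big)^{b}$ with $d=\mu$ in \eqref{burrExt}, both of which satisfy $g(d^{+})=0$. Under the stated hypotheses I would therefore either invoke this condition on $g$ near $d$, or else extend $F$ by $F(x)=0$ for $x\leq d$ and read $d$ off as the left endpoint of the support, the remaining monotonicity, continuity and limit-at-infinity properties already being in hand. Apart from this endpoint bookkeeping, the proof is a direct sign computation and poses no real obstacle.
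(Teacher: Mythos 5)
Your proposal is correct and follows essentially the same route as the paper's proof: compute $\barF'(x)=-a\,g'(x)\big(1+g(x)\big)^{-a-1}\leq0$ to get monotonicity, and check the limits at the two endpoints, with $g(x)\to\infty$ giving $\barF(x)\to0$. Your extra care at the left endpoint is in fact slightly more thorough than the paper, which only asserts $\lim_{x\to d^{+}}\barF(x)\leq1$ and does not address whether the resulting distribution is proper or carries an atom at $d$.
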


$F$ defined in (\ref{defF}) contains a wide range of distributions, in particular those of (\ref{burr}) and (\ref{burrExt}).
This paper focuses on distributions where the associated functions $g$ are those specified in Table \ref{tab01}.
Among these functions the one corresponding to the BXII distribution is included, $g_0$.
Note that the considered variants of $g$ may give more flexibility to $g_0$, in particular by allowing modeling of as heavy- as well light-tailed data as will be shown later.

\begin{table}[!h]
\centering
\begin{tabular}{lccc}
\hline
\multicolumn{1}{c}{$g$} & Parameters & Conditions & Support \\
\hline
$g_0(x):=\big(c\,(x+d)\big)^b$ & $b>0$, $c>0$, $d\in\Rset$ & & $x>-d$ \\
$g_1(x):=\left(c\,\frac{x^\epsilon(x+d)^p}{\log(x+d)}\right)^b$ & $b>0$, $c>0$, $d\in\Rset$, $\epsilon\geq0$, & $x+d>1$, & $x>0$ \\
 & $p\geq0$ & $\log(x+d)\big[px+\epsilon(x+d)\big]\geq x$ & \\
$g_2(x):=\left(c\,\frac{x^\epsilon}{\log(x+d)}\right)^b\,e^{x^p}$ & $b\geq0$, $c>0$, $d\in\Rset$, $\epsilon\geq0$, & $x+d>1$, & $x>0$ \\
 & $p\geq0$ & $(x+d)\log(x+d)\left[b\epsilon+px^p\right]\geq bx$ & \\
$g_3(x):=\big(c\,\frac{x^\epsilon\log(x+d+1)}{\log(x+d)}\big)^b$ & $b>0$, $c>0$, $d\in\Rset$, $\epsilon\geq0$ & $x+d>1$, & $x>0$ \\
 & & $(x+d)\log(x+d)$ & \\
 & & $\times\left[\epsilon(x+d+1)\log(x+d+1)+1\right]$ & \\
 & & $\geq x(x+d+1)\log(x+d+1)$ & \\
\hline
\end{tabular}
\caption{Variants of functions $g$}
\label{tab01}
\end{table}
Parameters of $F$ defined in (\ref{defF}) with functions $g$ given in Table \ref{tab01}, may be fixed \emph{a priori} for defining sfs with less parameters. This happens when, for instance, it is fixed $\mu=0$ and $\sigma=1$ in $\barF_B$ defined in (\ref{burrExt}) to obtain $\barF_B$ defined in (\ref{burr}).
On the other hand, $g_1$, $g_2$ and $g_3$ are not derivations of $g_0$, implying that the distributions based on these functions $g$ are new distribution alternatives with respect to the BXII distribution and its extension (\ref{burr}) and (\ref{burrExt}), respectively.

The analysis of the shape parameter of $F$ when $g$ is one of the functions $g_1$, $g_2$ or $g_3$ may be done through the tail index by evaluating (\ref{rv}), but the computation of the involved limit is complex.
A way that facilitates such an analysis, is to use the following relationship given in \cite{Karamata1930} and e.g. \cite{cadenaThesis,NN2015CKanalysis,NN2015CKbis,CadenaKratzOmey2016}.
\begin{equation}\label{classM}
\lim_{x\to\infty}\frac{\log\barF(x)}{\log x}=-\alpha,
\end{equation}
where $\alpha$ corresponds to the tail index of $\barF$.
Table \ref{tab02} shows the tail indexes corresponding to the functions $g$ defined in Table \ref{tab01} computed by using (\ref{classM}).
The computation of the tail index associated to the distribution based on $g_0$ is included in that table.
This result was mentioned above.
\begin{table}[!h]
\centering
\begin{tabular}{ccc}
\hline
\multicolumn{1}{c}{$g$} & Tail index ($\alpha$) & Condition \\
\hline
$g_0$ & $ab$ \\
$g_1$ & $ab(\epsilon+p)$ \\
$g_2$ & $ab\epsilon$ & $p=0$ \\
 & $\infty$ & $p>0$ \\
$g_3$ & $ab\epsilon$ \\
\hline
\end{tabular}
\caption{Shape parameters of $F$ with $g$ defined in Table \ref{tab01}}
\label{tab02}
\end{table}

As example of these computations, the ones when $g=g_2$ are illustrated:
\begin{eqnarray*} 
\lim_{x\to\infty}\frac{\log\barF(x)}{\log x} & = & \lim_{x\to\infty}\frac{\log\left[\left(1+\left(c\,\frac{x^\epsilon}{\log(x+d)}\right)^be^{x^p}\right)^{-a}\right]}{\log x} \\
 & = & -a\times\lim_{x\to\infty}\frac{\log\left(1+\left(c\,\frac{x^\epsilon}{\log(x+d)}\right)^be^{x^p}\right)}{\log x} \\
 & = & -a\times\left[b\epsilon-b\times\lim_{x\to\infty}\frac{\log\big(\log(x+d)\big)}{\log x}+
\lim_{x\to\infty}\frac{x^p}{\log x}\right] \\
 & = & \left\{
\begin{array}{ll}
-ab\epsilon, & \textrm{if $p=0$} \\
-\infty, & \textrm{if $p>0$},
\end{array}
\right.
\end{eqnarray*} 
where the L'H\^{o}pital rule has been applied.

The results presented in Table \ref{tab02} show the large range that is covered for the tail index through the variants of $g$ presented in Table \ref{tab01}.
When this index is finite, $\barF$ is heavy-tailed, as mentioned earlier.
This is the case of $g_0$ (seen earlier), $g_1$, 
$g_2$ if $p=0$, and $g_3$.
On the other hand, for 
$g_2$ if $p>0$, the tail index corresponds to functions that are not so heavy-tailed.
This last feature that is not present in the BXII distribution gives to the EBXIID family more flexibility for fitting data.

Plots of cdfs of the members of the EBXII family that are studied in this paper are exhibited in Figure \ref{fig01}.
Each plot presents distributions associated to each one of the functions $g_0$, $g_1$, $g_2$ and $g_3$. 
These distributions are obtained by varying two of their parameters, $b$ and $c$, the others remaining fixed.
Through all these plots $x$ belonging to $(0,2)$ is considered.
All these plots show diverse behaviors of cdfs.
On the one hand, some curves rapidly approach $y=1$, meaning that the distributions involved tend to be light-tailed.
This is the case of 
$g_2$ with $p>0$.
These results are in line with the value of the tail indexes presented in Table \ref{tab02}.
On the other hand, the cdfs related with $g_0$, $g_1$ 
and $g_3$ are heavy-tailed, the more notorious being the cdfs involving $g_3$ because these curves seem to slowly converge to $y=1$.
Deepening in this behavior, it is found that the concerned tail index is $ab\epsilon\in\big\{2.4,3.6\big\}$, which is the lowest value that a tail index can reach among the cases studied.
For the examples presented, these values of tail indexes equal to the ones of the distribution that have associated $g_0$.
In the other case, $g_1$, its corresponding tail index, $\alpha_1$, is positive and, it is appreciated that the cdf approaches faster to $y=1$ since this index is higher, i.e. $\alpha_1>\alpha_0$ and $\alpha_1>\alpha_3$ with $\alpha_i$ the tail index of the cdf with $g=g_i$, $i$ = 0, 1, 2, 3.

\begin{figure}
\centering 
\subfigure[$g_0$]{
\includegraphics[scale=0.32]{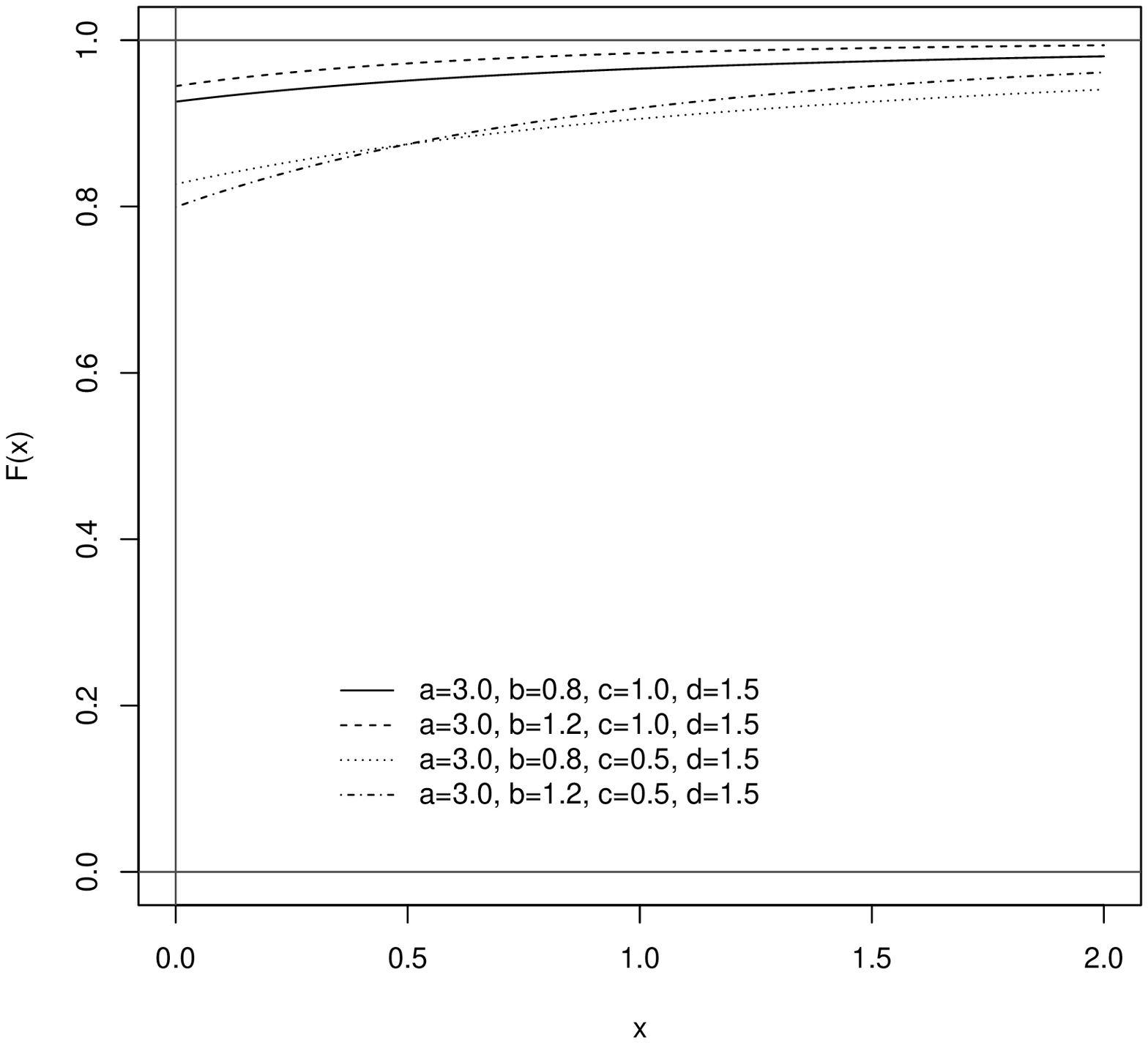} 
\label{fig01:subfig0}
}
\subfigure[$g_1$]{
\includegraphics[scale=0.32]{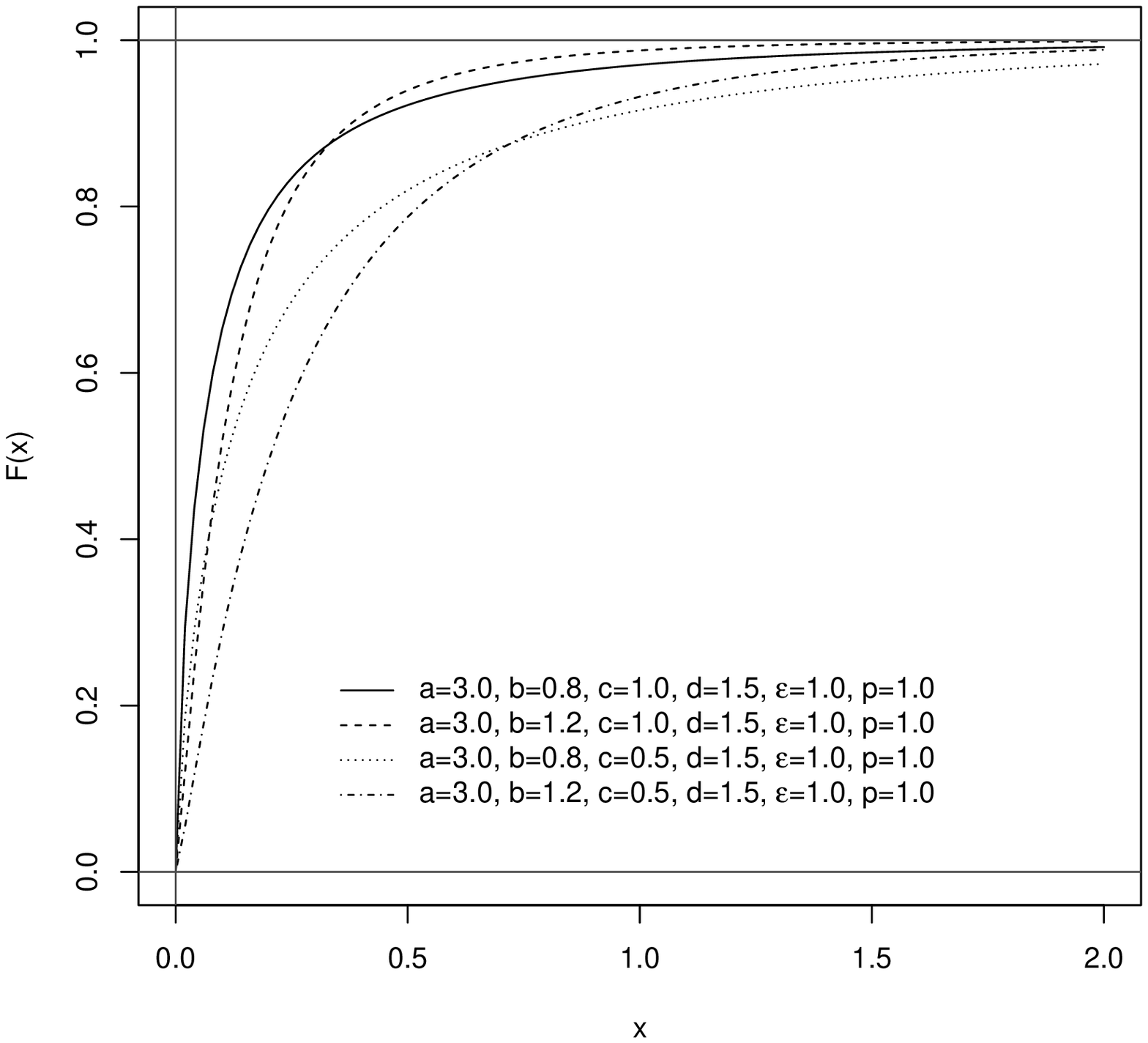} 
\label{fig01:subfig1}
}
\subfigure[$g_2$]{
\includegraphics[scale=0.32]{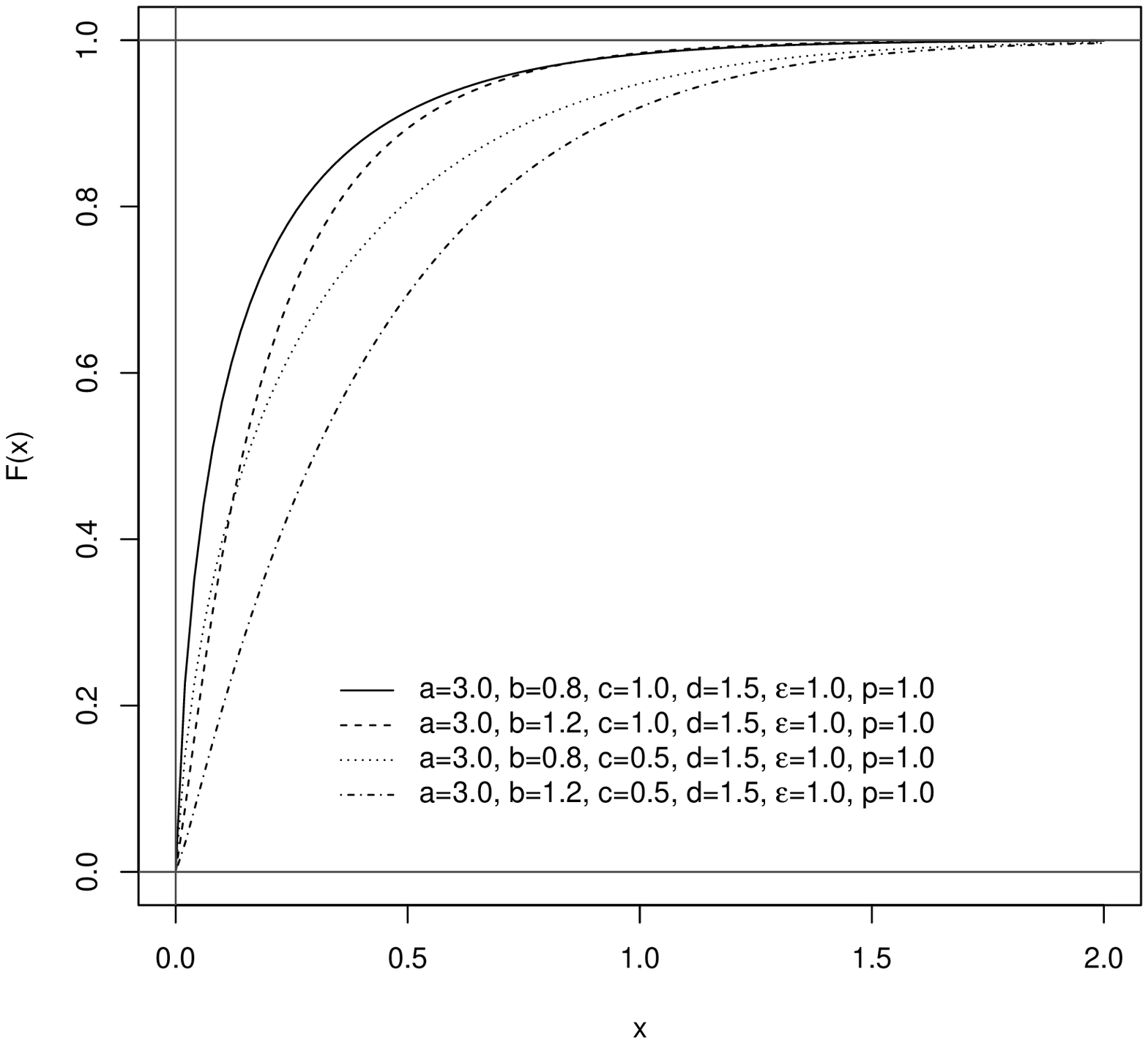} 
\label{fig01:subfig5}
}
\subfigure[$g_3$]{
\includegraphics[scale=0.32]{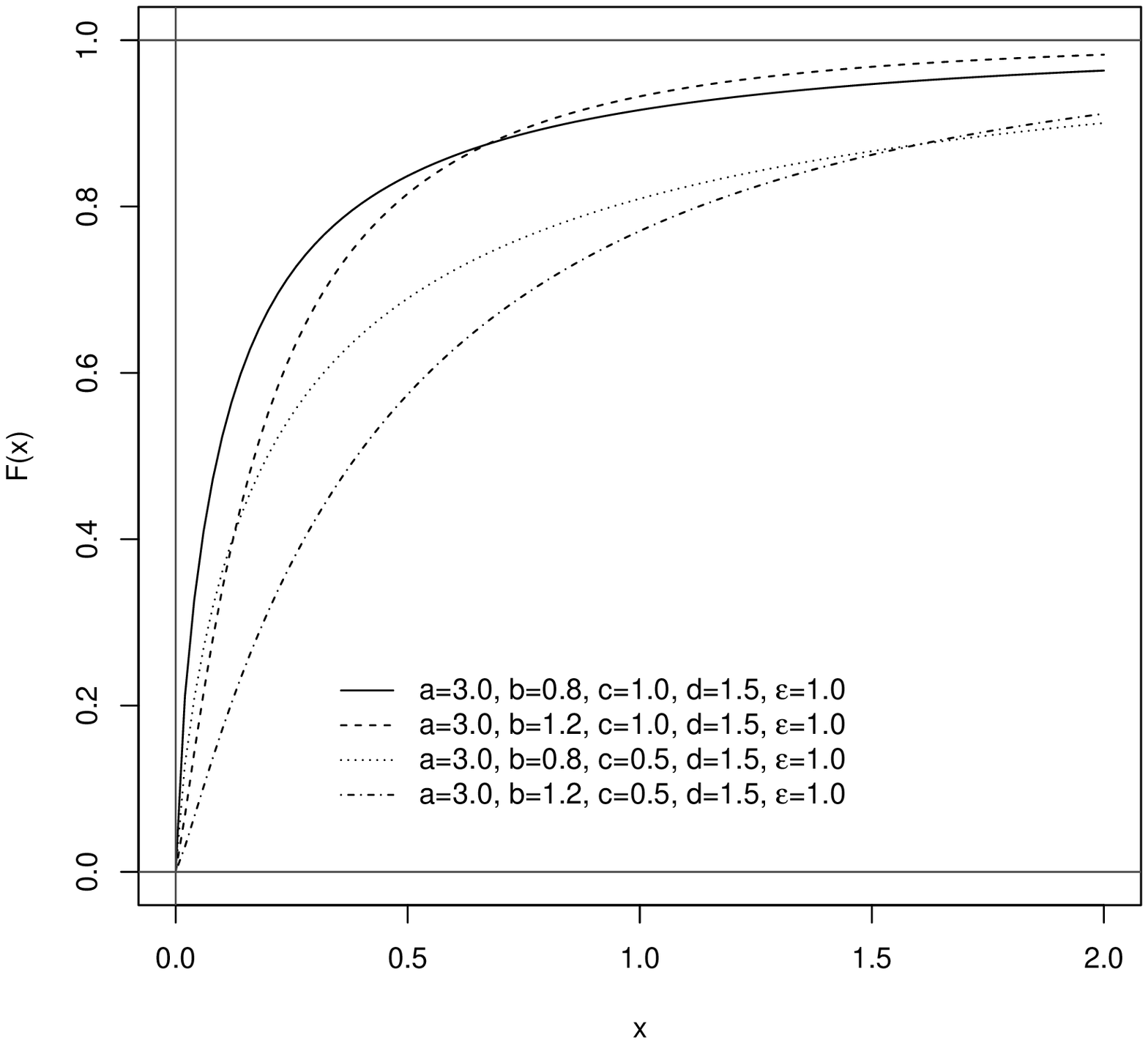} 
\label{fig01:subfig6}
}
\caption{Comparisons of cdfs associated to members of the EBXIID family}
\label{fig01}
\end{figure}

Noticing that the tail index of functions $g$ involved in (\ref{defF}) can be computed, it is found that there is correspondence between the finiteness of the tail indexes of $\barF$ and of its associated $g$.
Indeed, since $g(x)\to\infty$ as $x\to\infty$, we have
$$
\lim_{x\to\infty}\frac{\log\barF(x)}{\log x}=
-a\times\lim_{x\to\infty}\frac{\log\big(1+g(x)\big)}{\log x}=
-a\times\lim_{x\to\infty}\frac{\log g(x)}{\log x}.
$$
Hence, the finiteness of $\barF$ can be concluded from the one of its associated function $g$.
Moreover, the tail index of $\barF$ can be computed as $a$ times the tail index of $g$.

\section{Statistical Properties of Members of the EBXIID Family}
\label{sec2}

In this section statistical properties of the members of the EBXIID family studied in this paper are analyzed.

\subsection{Mode}

Since the mode of a continuous probability distribution is the value at which its probability density function has its maximum value,
the mode of $F$ given in (\ref{defF}) is then the solution $x_m$ of $F''(x)=0$ assuming that $g$ associated to $F$ is at least twice differentiable.

The conditions to define $x_m$ when considering the members of the EBXIID family studied in this paper are collected in the following result.

\begin{prop}\label{PropMode}
The mode $x_m$ of $F'$ with $F$ defined in (\ref{defF}) exists and either satisfies
$$
(a+1)\big[g'(x_m)\big]^2=(1+g(x_m))g''(x_m)
$$
where $g$ is defined in Table \ref{tab02}, or $x_m=\inf\textrm{support}\,(g)$.
\end{prop}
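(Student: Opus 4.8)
The plan is to reduce the mode condition $F''(x)=0$ to the stated algebraic relation by differentiating $F$ twice, and then to settle existence through a continuity argument on the density combined with an interior-versus-boundary dichotomy.

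First I would compute the density $f=F'$. Differentiating (\ref{defF}) with respect to $x$ gives
\begin{equation}\label{eq:density-mode}
f(x)=F'(x)=a\,g'(x)\,\big(1+g(x)\big)^{-(a+1)},\qquad x>d,
\end{equation}
which is nonnegative since $a>0$, $g>0$ and $g'\geq0$, as a density must be. I would then differentiate once more, assuming as in the statement that $g$ is at least twice differentiable. Applying the product rule to (\ref{eq:density-mode}) and factoring out the common positive power yields
$$
F''(x)=a\,\big(1+g(x)\big)^{-(a+2)}\left[\big(1+g(x)\big)g''(x)-(a+1)\big(g'(x)\big)^2\right].
$$
Because $a>0$ and $\big(1+g(x)\big)^{-(a+2)}>0$ throughout the support (as $1+g(x)>1$), the equation $F''(x)=0$ is equivalent to the vanishing of the bracketed factor, that is, to
$$
(a+1)\big[g'(x)\big]^2=\big(1+g(x)\big)g''(x),
$$
which is precisely the relation in the statement. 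Hence every interior stationary point of $f$ satisfies this equation, and conversely any interior solution is a stationary point of the density.

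It remains to guarantee that a maximizer exists and to justify the two alternatives. For this I would note that $f$ is continuous on the support $(d_0,\infty)$, where $d_0:=\inf\textrm{support}\,(g)$. Since $g(x)\to\infty$ as $x\to\infty$, the factor $\big(1+g(x)\big)^{-(a+1)}$ drives $f(x)\to0$ at $+\infty$ for each of the functions $g_0,\dots,g_3$ of Table \ref{tab01}; consequently the supremum of $f$ cannot be approached only near $+\infty$ and must be attained. If it is attained at an interior point, that point is a stationary point and the displayed equation holds there; otherwise the supremum is attained at the left endpoint, giving $x_m=d_0=\inf\textrm{support}\,(g)$. This yields exactly the stated dichotomy.

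The hard part will be the existence step rather than the computation, which is routine. The crucial ingredient is the decay $f(x)\to0$ as $x\to\infty$, so that the mode is a genuine maximizer and not a value merely approached at infinity; I expect this to be verified case by case from the explicit forms of $g$ in Table \ref{tab01}, together with a check that each such $g$ is twice differentiable so that $F''$ is well defined. A secondary subtlety is controlling the behaviour of $f$ near the left endpoint $d_0$ in order to decide, for a given parameter regime, whether the interior or the boundary case occurs.
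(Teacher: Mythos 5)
Your proposal is correct and follows essentially the same route as the paper: the paper likewise notes that twice-differentiability of $g$ makes $F$ twice differentiable, invokes the interior-versus-endpoint dichotomy for the maximizer of $F'$ (citing a calculus text where you supply the decay of $f$ at infinity), and obtains the displayed equation from $F''(x_m)=0$ via the same ``straightforward computations'' you carry out explicitly. Your version is simply more detailed, in particular in writing out $F''(x)=a\big(1+g(x)\big)^{-(a+2)}\big[(1+g(x))g''(x)-(a+1)(g'(x))^2\big]$ and in flagging the case-by-case check that $f(x)\to0$ as $x\to\infty$.
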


For instance, when considering $g_0$ with parameters $a$, $b$, $c$ and $d$, i.e. the BXII distribution, with support $\big(-d,\infty\big)$, then,
if $b>1$, after straightforward computations,
$$
x_m=\frac{1}{c}\left(\frac{b-1}{ab+1}\right)^{1/b}-d;
$$
and, if \ $0<b\leq1$, $x_m=-d$.

\subsection{Hazard Function}

The hazard function (also known as the failure rate, hazard rate, or force of mortality) $h(x)$ is the ratio between the probability density function $f(x)$ and the survival function $\barF(x)$, i.e.
$$
h(x):=\frac{f(x)}{\barF(x)}.
$$

When considering the members of the EBXIID family studied in this paper, this function is then expressed by 
$$
h(x)
=\frac{a\,g'(x)}{1+g(x)}
,\quad x>d.
$$
This function is always non-negative since $g$ satisfies $g'(x)\geq0$.
Moreover, its trend can vary when $x\to\infty$.
If $g$ is mainly based on polynomial expressions, then $h(x)\to0$, whereas if $g$ is mainly conformed by exponential expressions, then $h(x)$ may show different trends.
Among the functions $g$ giving the former of these behaviors for $h$ are $g_0$, $g_1$, $g_2$ with $p=0$, 
and $g_3$,
and for the latter is 
$g_2$ if $p>0$.

As examples, let us illustrate the behaviors of $h(x)$ for $g_3$. 
We have
\begin{eqnarray*}
\lim_{x\to\infty}h(x) & = & \lim_{x\to\infty}\frac{ab\big(c\,\frac{x^\epsilon\log(x+d+1)}{\log(x+d)}\big)^b\left(\frac{\epsilon}{x}+
\frac{1}{(x+d+1)\log(x+d+1)}-\frac{1}{(x+d)\log(x+d)}\right)}{1+\big(c\,\frac{x^\epsilon\log(x+d+1)}{\log(x+d)}\big)^b} \\
 & = & ab\times\lim_{y\to\infty}\frac{y}{1+y}\times\lim_{x\to\infty}\left(\frac{\epsilon}{x}+
\frac{1}{(x+d+1)\log(x+d+1)}-\frac{1}{(x+d)\log(x+d)}\right)
 \\
 & = & 0
\end{eqnarray*}
where $y=\big(c\,\frac{x^\epsilon\log(x+d+1)}{\log(x+d)}\big)^b$. 

These behaviors of $h$ are collected in Table \ref{tab02newX}.
\begin{table}[!h]
\centering
\begin{tabular}{ccc}
\hline
\multicolumn{1}{c}{$g$} & $h(x)$ as $x\to\infty$ & Condition \\
\hline
$g_0$ & 0 \\
$g_1$ & 0 \\
$g_2$ & 0 & $p<1$ \\
 & $a$ & $p=1$ \\
 & $\infty$ & $p>1$ \\
$g_3$ & 0 \\
\hline
\end{tabular}
\caption{Trend of $h(x)$ as $x\to\infty$, with $g$ defined in Table \ref{tab01}}
\label{tab02newX}
\end{table}

Figure \ref{fig01Y} shows behaviors of $h$ for some values of its parameters when $g$ is defined in Table \ref{tab01}.
As expected, convergences of this function to 0 as $x\to\infty$ are observed for $g_0$, $g_1$ and $g_3$, whereas 
to 3.0 
for $g_2$ fixed $p=1$.

\begin{figure}
\centering 
\subfigure[$h_0$]{
\includegraphics[scale=0.32]{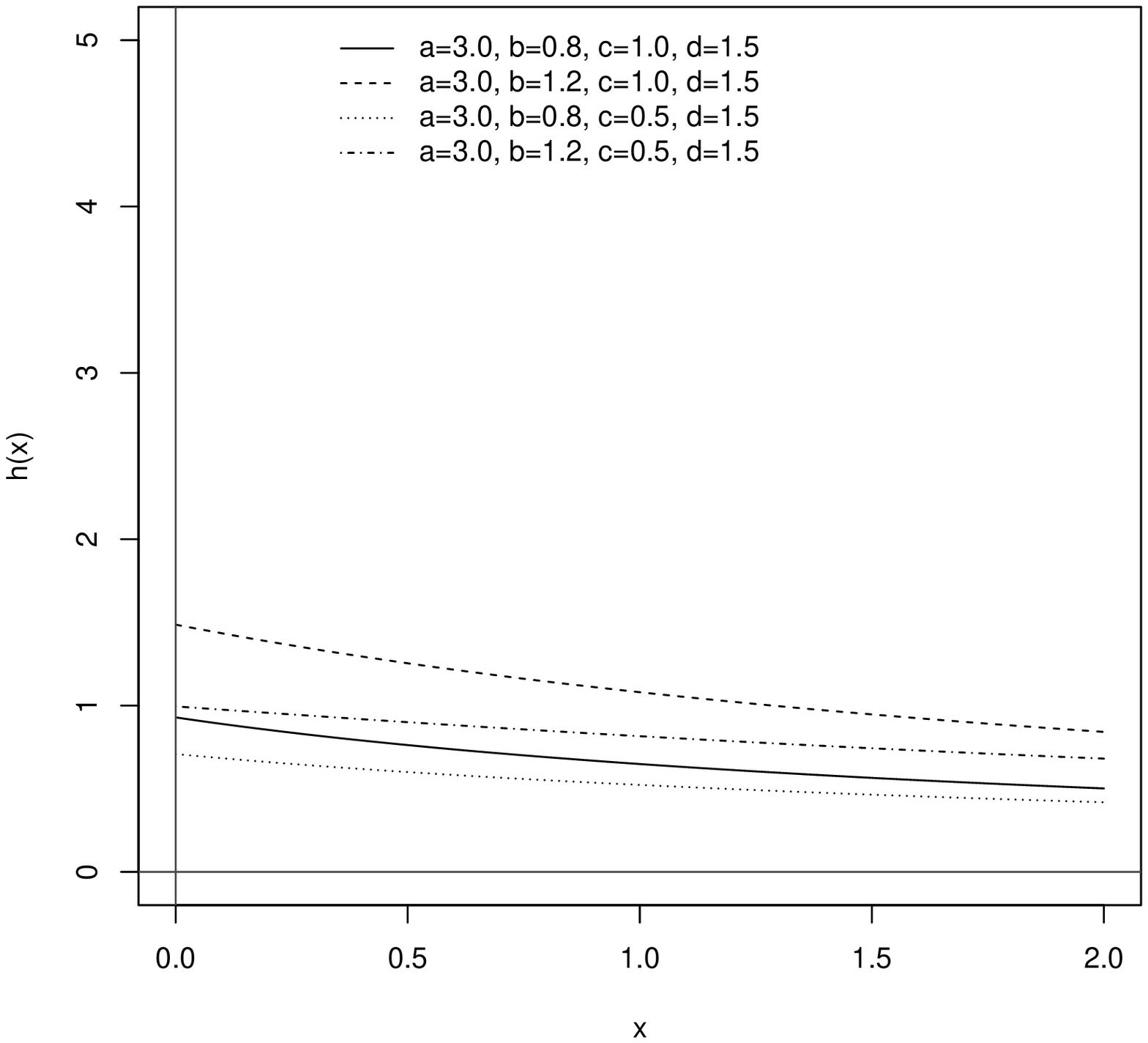} 
\label{fig01:subfig0X}
}
\subfigure[$h_1$]{
\includegraphics[scale=0.32]{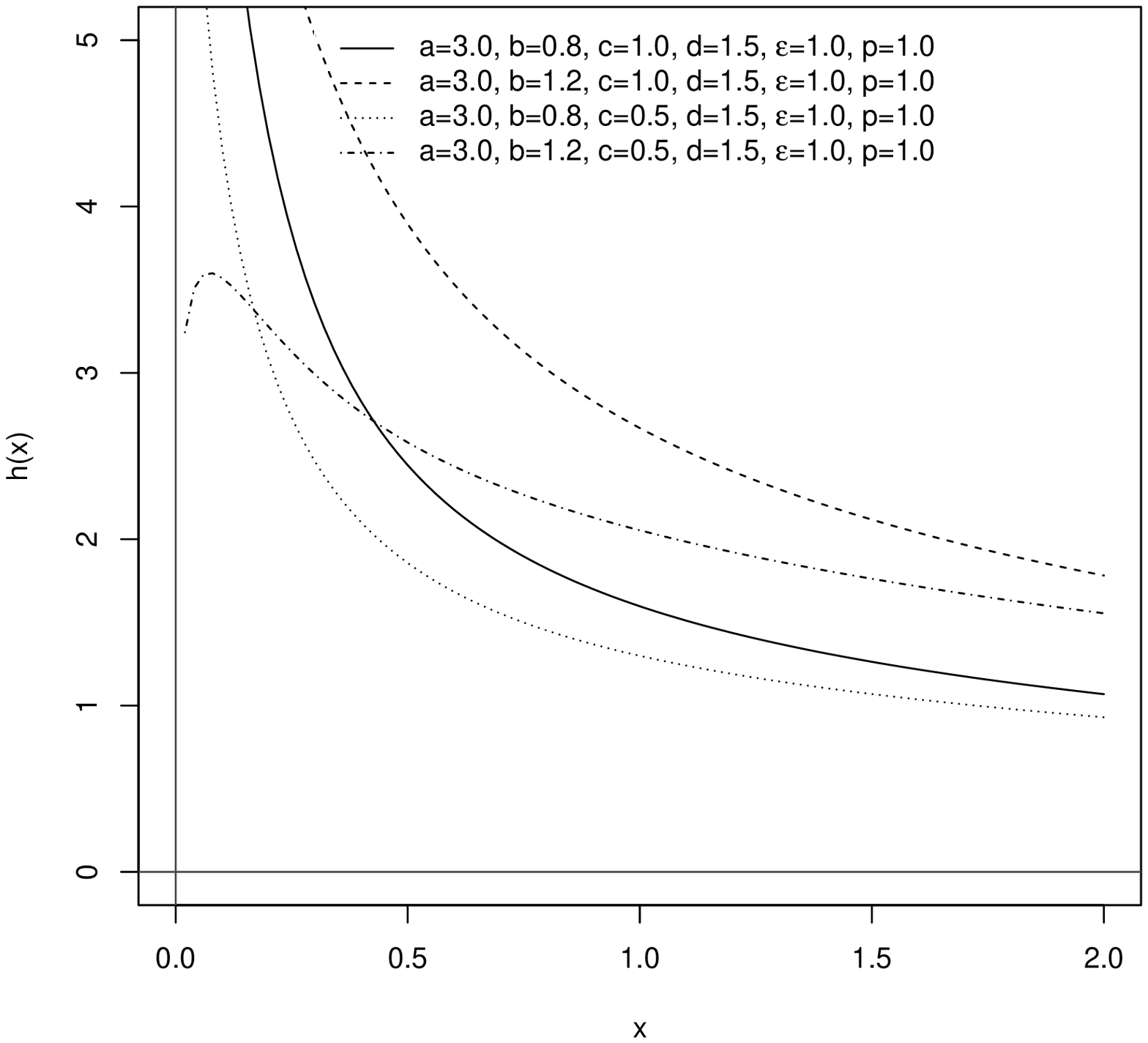} 
\label{fig01:subfig1X}
}
\subfigure[$h_2$]{
\includegraphics[scale=0.32]{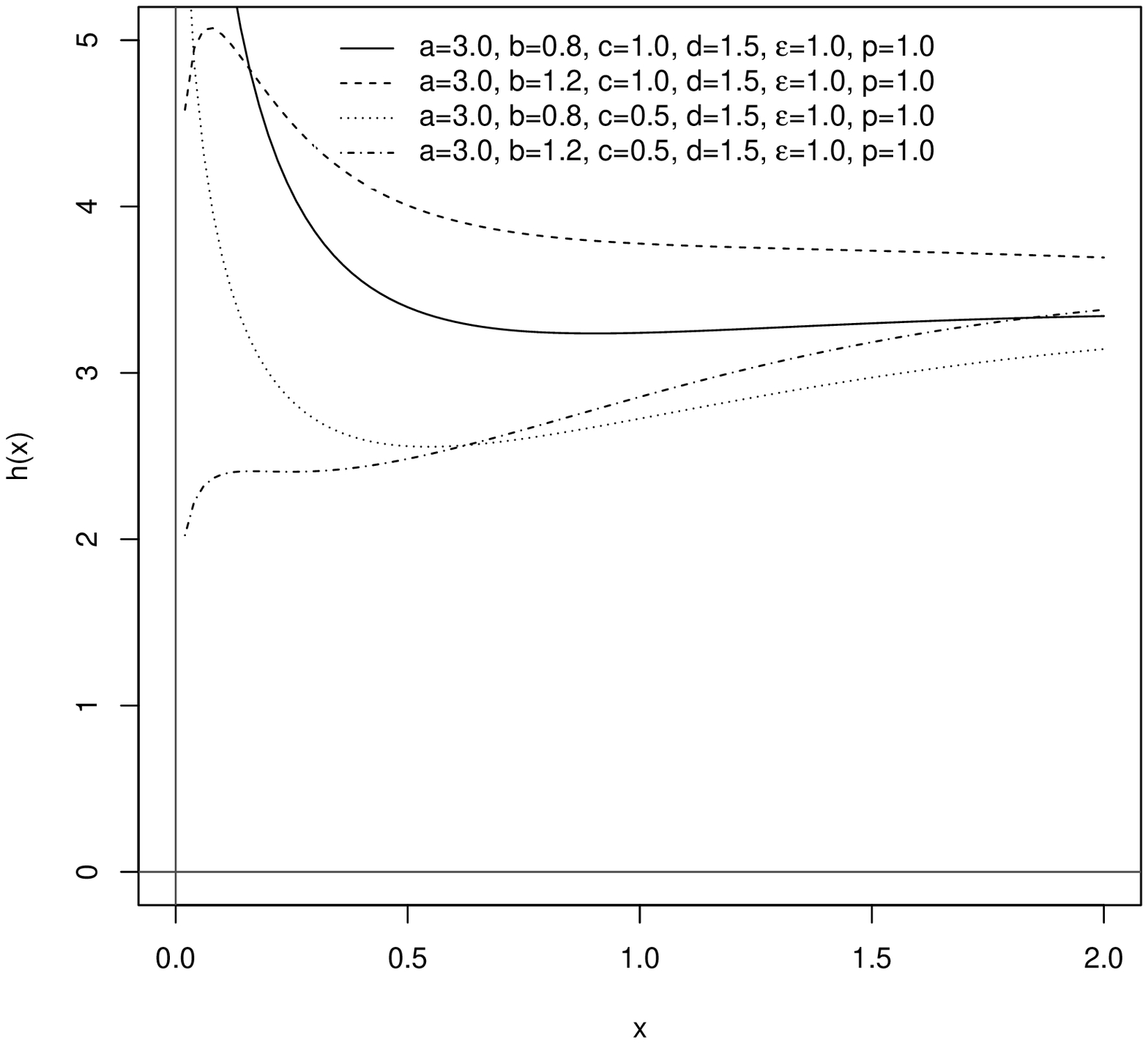} 
\label{fig01:subfig5X}
}
\subfigure[$h_3$]{
\includegraphics[scale=0.32]{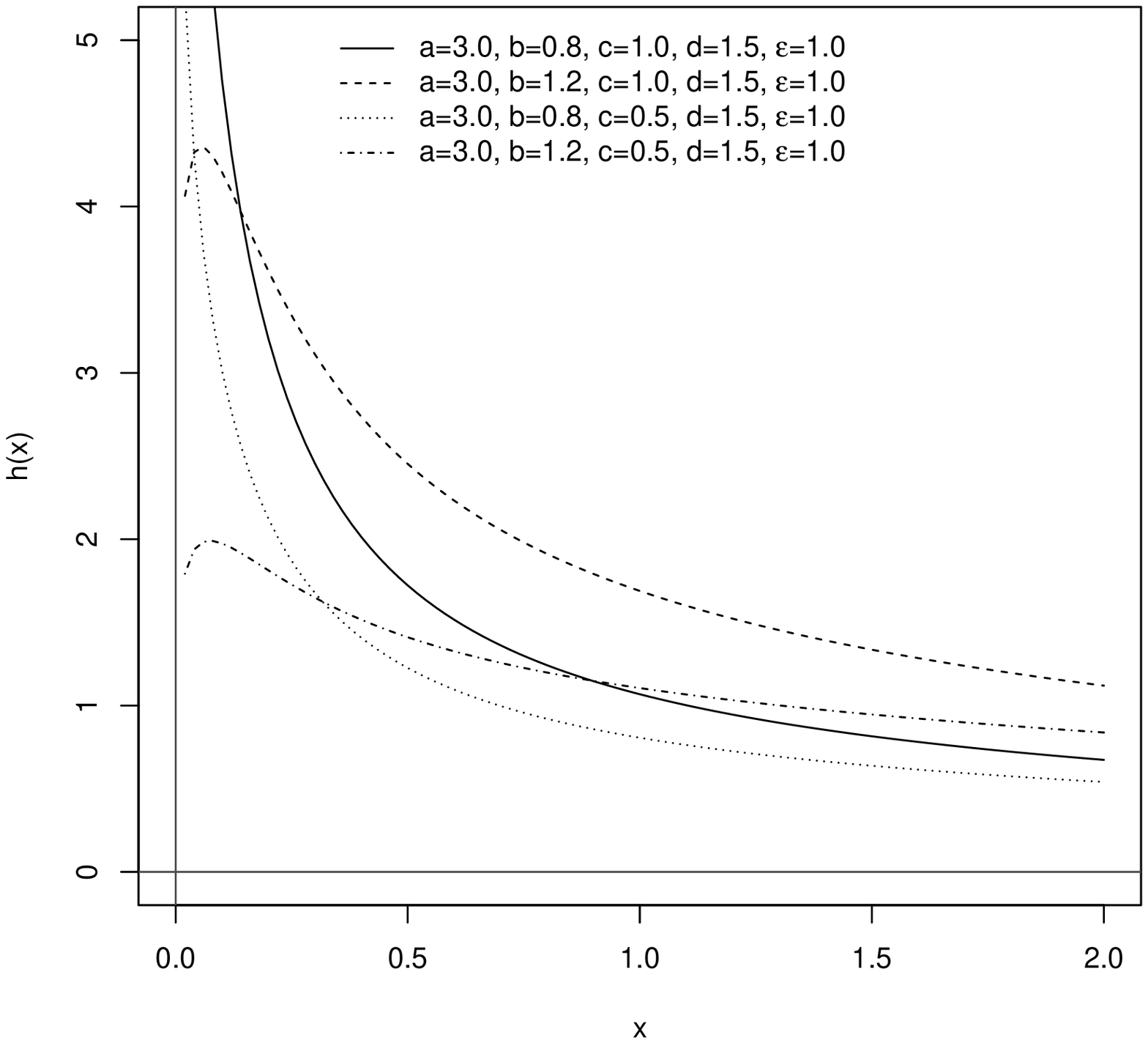} 
\label{fig01:subfig6X}
}
\caption{Comparisons of hrs associated to members of the EBXIID family}
\label{fig01Y} 
\end{figure}

\subsection{Quantiles and Random Number Generation}
\label{Quantilesandrandomnumbergeneration}

In this subsection $F$ defined by (\ref{defF}) is assumed to be strictly increasing.

Applying the inverse transform method, the quantile function $q(v)$, $0<v<1$, is then obtained by solving
\begin{equation}\label{quantile}
F\big(q(v)\big)=v.
\end{equation}
Thus, the previous relation can be applied when $g$ attached to $F$ is taken from Table \ref{tab01} since all these functions are strictly increasing.

Table \ref{exQuantile} presents illustrations of quantiles of $F$ when $g$ is taken from Table \ref{tab01}.
Computations of these values have been made using the function \texttt{uniroot} in the R software package.
Keeping fixed a quantile, this table reveals the effect of the tail index.
Lower tail index higher quantiles.
Among all parameters associated to the distributions considered, the ones related to $g_0$ and $g_3$ are the lowest, and show the highest quantiles.
These outputs are followed by the quantiles of the distributions with $g$ as $g_1$.
As expected, the quantiles for $g_2$ with $p=1$ are the lowest.
Lower differences between quantiles are observed when $v$ increases, and they increase if $v$ decreases.
This last effect is important in areas where underlying risk can become important, for instance in insurance, finance, natural disasters \cite{embrechts1997,CohenSiegelman2010,BataybalBeladi2001}.
Small variations in estimates of the tail index may produce large variations in quantiles.

\begin{table}[!h]
\centering
\begin{tabular}{clrrr}
\hline
$g$ & \multicolumn{1}{c}{Parameters} & $q(0.90)$ & $q(0.95)$ & $q(0.99)$ \\
\hline
$g_0$ & $a=3.0$, $b=0.8$, $c=1.0$, $d=1.5$ & -0.30 & 0.46 & 3.53 \\
 & $a=3.0$, $b=1.2$, $c=1.0$, $d=1.5$ & -0.37 & 0.06 & 1.43 \\
 & $a=3.0$, $b=0.8$, $c=0.5$, $d=1.5$ & 0.89 & 2.42 & 8.56 \\
 & $a=3.0$, $b=1.2$, $c=0.5$, $d=1.5$ & 0.75 & 1.63 & 4.37 \\
$g_1$ & $a=3.0$, $b=0.8$, $c=1.0$, $d=1.5$, $\epsilon=1.0$, $p=1.0$ & 0.40 & 0.70 & 1.81 \\
 & $a=3.0$, $b=1.2$, $c=1.0$, $d=1.5$, $\epsilon=1.0$, $p=1.0$ & 0.37 & 0.54 & 1.07 \\
 & $a=3.0$, $b=0.8$, $c=0.5$, $d=1.5$, $\epsilon=1.0$, $p=1.0$ & 0.87 & 1.43 & 3.29 \\
 & $a=3.0$, $b=1.2$, $c=0.5$, $d=1.5$, $\epsilon=1.0$, $p=1.0$ & 0.81 & 1.15 & 2.09 \\
$g_2$ & $a=3.0$, $b=0.8$, $c=1.0$, $d=1.5$, $\epsilon=1.0$, $p=1.0$ & 0.45 & 0.66 & 1.15 \\
 & $a=3.0$, $b=1.2$, $c=1.0$, $d=1.5$, $\epsilon=1.0$, $p=1.0$ & 0.51 & 0.69 & 1.11 \\
 & $a=3.0$, $b=0.8$, $c=0.5$, $d=1.5$, $\epsilon=1.0$, $p=1.0$ & 0.75 & 1.01 & 1.57 \\
 & $a=3.0$, $b=1.2$, $c=0.5$, $d=1.5$, $\epsilon=1.0$, $p=1.0$ & 0.92 & 1.16 & 1.67 \\
$g_3$ & $a=3.0$, $b=0.8$, $c=1.0$, $d=1.5$, $\epsilon=1.0$ & 0.84 & 1.56 & 4.64 \\
 & $a=3.0$, $b=1.2$, $c=1.0$, $d=1.5$, $\epsilon=1.0$ & 0.78 & 1.18 & 2.53 \\
 & $a=3.0$, $b=0.8$, $c=0.5$, $d=1.5$, $\epsilon=1.0$ & 1.99 & 3.52 & 9.71 \\
 & $a=3.0$, $b=1.2$, $c=0.5$, $d=1.5$, $\epsilon=1.0$ & 1.85 & 2.73 & 5.49 \\
\hline
\end{tabular}
\caption{Quantiles for the distributions shown in Figure \ref{fig01}}
\label{exQuantile}
\end{table}

The solution $q$ of (\ref{quantile}) given $v$, $0<v<1$, could be used to generate random numbers of a random variable (rv) that follows members of the EBXIID family assuming that $F'>0$.
This is an application of the inverse transform sampling method to draw random samples.
This method to generate random numbers consists in \cite{Devroye}:
\begin{enumerate}
\item
Generate a random number $v$ from the standard uniform distribution in the interval $[0,1]$; and,

\item
Compute $q$ such that $F(q)=v$, i.e. $q$ satisfies (\ref{quantile}).

\end{enumerate}

The implementation of the previous method may be done by generating random numbers following a uniform distribution that may be performed using the function \texttt{runif} in the R software package, and, after, by computing quantiles that may be performed using the function \texttt{uniroot} mentioned above.

We will come back on this random number generation procedure later in order to simulate random numbers following members of the EBXIID family.
These numbers will be used to study the performance of the new distributions introduced in this paper.

\section{Maximum Likelihood Estimation}
\label{sec3}

In this section we propose 
the method of maximum likelihood for estimating the parameters of the members of the EBXIID family with $g$ defined in Table \ref{tab01}.

Let $X$ be a rv following one of such members, say a twice differentiable cdf $F$, with parameters $\theta=(\theta_1,\ldots,\theta_w)$, and let $x_1$, \ldots, $x_n$ be a sample of $X$ obtained independently.

Following the method of maximum likelihood, the likelihood function of this random sample is then given by
$$
L(\theta|x_1,\ldots,x_n)=\prod_{k=1}^nF'(\theta,x_k),
$$
and its log-likelihood function then is
$$
l(\theta|x_1,\ldots,x_n)=\sum_{k=1}^n\log F'(\theta,x_k).
$$
Maximum likelihood estimates (MLEs) of $\theta_1$, \ldots, $\theta_w$ might be reached by solving the non-linear system obtained by equaling to 0 the derivatives of $l$ with respect to $\theta$, i.e., for $i$ = 1, \ldots, $w$,
\begin{eqnarray*}
\frac{\partial l}{\partial\theta_i} & = & \sum_{k=1}^n\frac{1}{F'(\theta,x_k)}\frac{\partial F'(\theta,x_k)}{\partial\theta_i}\quad=\quad0.
\end{eqnarray*}
There are not explicit solutions for this system.
A method to numerically solve such a system is the Newton-Raphson (NR) algorithm.
This is a well-known and useful technique for finding roots of systems of non-linear equations in several variables. 
We use the function \texttt{optimx} in the R software package, where by choosing the optimization method called limited-memory BFGS \cite{ByrdNocedalZhu1995} (L-BFGS-B) that is a type of quasi-Newton optimization and allows box-constrained optimization, a minimization of an objective function can be carried out.
In our case the function \texttt{optimx} is applied to the objective function $-l(\theta|x_1,\ldots,x_n)$ in order to obtain MLEs $\hat{\theta}$ of $\theta$.

For interval estimation of $\theta$ and hypothesis tests on these parameters, we use the $w\times w$ observed information matrix given by
$$
I(\theta)
=-E\left[
\begin{array}{ccc}
\displaystyle \frac{\partial^2 l}{\partial\theta_1^2} & \cdots & \displaystyle \frac{\partial^2 l}{\partial\theta_1\partial\theta_w} \\
 \vspace{-3mm} & \\
\vdots & \ddots & \vdots \\
 \vspace{-3mm} & \\
\displaystyle \frac{\partial^2 l}{\partial\theta_w\partial\theta_1} & \cdots & \displaystyle \frac{\partial^2 l}{\partial\theta_w^2}
\end{array}
\right]
$$
where
$$
\frac{\partial^2 l}{\partial\theta_i\partial\theta_j}=\sum_{k=1}^n\left(\frac{1}{F'(\theta,x_k)}\frac{\partial^2 F'(\theta,x_k)}{\partial\theta_i\partial\theta_j}-\frac{1}{\big[F'(\theta,x_k)\big]^2}\frac{\partial F'(\theta,x_k)}{\partial\theta_i}
\frac{\partial F'(\theta,x_k)}{\partial\theta_j}\right).
$$
Under certain regularity conditions \cite{Cramer}, the MLE $\hat{\theta}$ approximates as $n$ increases a multivariate normal distribution with
mean equal to the true parameter value $\theta$ and variance-covariance matrix given
by the inverse of the observed information matrix, i.e. $\Sigma=\big[\sigma_{ij}\big]=I^{-1}(\theta)$.
Hence, the asymptotic behavior of two-sided $(1-\tau)\times100~\%$ confidence intervals for the
parameter $\theta_1$, \ldots, $\theta_w$ are approximately, for $i$ = 1, \ldots, $w$,
$$
\hat{\theta_i}\pm z_{\epsilon/2}\sqrt{\hat{\sigma}_{ii}}
$$
where $z_\delta$ represents the $\delta\times100~\%$ percentile of the standard normal distribution.

Expressions for estimating the parameters of the members of the EBXIID family with $g$ as in Table \ref{tab01} via the method of maximum likelihood and for estimating their information matrices are presented in Annexe \ref{RelationsForEstimatingParameters}.

\section{Simulation Studies}
\label{sec4}

In this section, Monte Carlo simulations to assess the performance of the MLEs of the members of the EBXIID family with $g$ as in Table \ref{tab01} are performed.
The parameter values shown in Figure \ref{fig01} are adopted as the true parameters,
where he following parameters are always fixed: $a=3.0$, $d=1.5$, $\epsilon=1.0$ and $p=1.0$.

Simulation scenarios are produced by varying the sample size $n$: 1\,000 and 10\,000.
These simulations are based on the generation of random numbers as indicated in Subection \ref{Quantilesandrandomnumbergeneration}.
In order to always replicate the same results a fixed seed has been used.
The code used in these studies is available upon request.

Table \ref{ParameterestimatesSimulation} shows estimates for $b$ and $c$ and log-likelihoods.
Standard errors in brackets.
Estimates and standard errors obtained from observed Hessian matrices have been computed using the function \texttt{optimx} mentioned above.
These results show that the MLE method works well since estimates approach nicely the true parameters.
Further, as expected, the errors of these estimates decrease when $n$ increases.

\begin{table}[!h]
\centering
{\small
\begin{tabular}{crccccccr}
\hline
 & & & \multicolumn{2}{c}{Parameters$^{(*)}$} & & \multicolumn{2}{c}{Estimates} & \multicolumn{1}{c}{Log-likelihood} \\
\cline{4-5}
\cline{7-9}
$g$ & \multicolumn{1}{c}{$n$} & & $b$ & $c$ & & $b$ & $c$ & \multicolumn{1}{c}{$-l$} \\
\hline
$g_1$ & 1\,000 & & 0.8 & 1.0 & & $0.781_{(0.019)}$ & $0.963_{(0.056)}$ & $-1099$ \\
      & 10\,000 & & 0.8 & 1.0 & & $0.788_{(0.006)}$ & $0.976_{(0.018)}$ & $-10920$ \\
\cline{2-9}
      & 1\,000 & & 1.2 & 1.0 & & $1.189_{(0.030)}$ & $0.984_{(0.037)}$ & $-833$ \\
      & 10\,000 & & 1.2 & 1.0 & & $1.194_{(0.009)}$ & $0.991_{(0.012)}$ & $-8309$ \\
\cline{2-9}
      & 1\,000 & & 0.8 & 0.5 & & $0.791_{(0.020)}$ & $0.487_{(0.028)}$ & $-360$ \\
      & 10\,000 & & 0.8 & 0.5 & & $0.793_{(0.006)}$ & $0.491_{(0.009)}$ & $-3550$ \\
\cline{2-9}
      & 1\,000 & & 1.2 & 0.5 & & $1.188_{(0.030)}$ & $0.492_{(0.018)}$ & $-78$ \\
      & 10\,000 & & 1.2 & 0.5 & & $1.194_{(0.009)}$ & $0.495_{(0.006)}$ & $-758$ \\
\hline
$g_2$ & 1\,000 & & 0.8 & 1.0 & & $0.779_{(0.023)}$ & $0.952_{(0.063)}$ & $-891$ \\
      & 10\,000 & & 0.8 & 1.0 & & $0.790_{(0.007)}$ & $0.975_{(0.020)}$ & $-8860$ \\
\cline{2-9}
      & 1\,000 & & 1.2 & 1.0 & & $1.176_{(0.036)}$ & $0.972_{(0.044)}$ & $-513$ \\
      & 10\,000 & & 1.2 & 1.0 & & $1.194_{(0.011)}$ & $0.990_{(0.014)}$ & $-5119$ \\
\cline{2-9}
      & 1\,000 & & 0.8 & 0.5 & & $0.778_{(0.026)}$ & $0.472_{(0.035)}$ & $-313$ \\
      & 10\,000 & & 0.8 & 0.5 & & $0.795_{(0.008)}$ & $0.491_{(0.011)}$ & $-3047$ \\
\cline{2-9}
      & 1\,000 & & 1.2 & 0.5 & & $1.170_{(0.041)}$ & $0.481_{(0.026)}$ & $-102$ \\
      & 10\,000 & & 1.2 & 0.5 & & $1.194_{(0.013)}$ & $0.494_{(0.008)}$ & $-1066$ \\
\hline
$g_3$ & 1\,000 & & 0.8 & 1.0 & & $0.789_{(0.020)}$ & $0.972_{(0.056)}$ & $-455$ \\
      & 10\,000 & & 0.8 & 1.0 & & $0.792_{(0.006)}$ & $0.981_{(0.018)}$ & $-4494$ \\
\cline{2-9}
      & 1\,000 & & 1.2 & 1.0 & & $1.188_{(0.030)}$ & $0.984_{(0.037)}$ & $-161$ \\
      & 10\,000 & & 1.2 & 1.0 & & $1.194_{(0.009)}$ & $0.991_{(0.012)}$ & $-1591$ \\
\cline{2-9}
      & 1\,000 & & 0.8 & 0.5 & & $0.791_{(0.020)}$ & $0.487_{(0.028)}$ & $-350$ \\
      & 10\,000 & & 0.8 & 0.5 & & $0.796_{(0.006)}$ & $0.493_{(0.009)}$ & $-3590$ \\
\cline{2-9}
      & 1\,000 & & 1.2 & 0.5 & & $1.188_{(0.030)}$ & $0.492_{(0.018)}$ & $-678$ \\
      & 10\,000 & & 1.2 & 0.5 & & $1.194_{(0.009)}$ & $0.495_{(0.006)}$ & $-6823$ \\
\hline
\multicolumn{9}{l}{$^{(*)}$ Fixed parameters: $a=3.0$, $d=1.5$, $\epsilon=1.0$ and $p=1.0$}
\end{tabular}
}
\caption{Parameter estimates for selected models}
\label{ParameterestimatesSimulation}
\end{table}

\section{Applications}
\label{sec5}

In this section, we present applications in order to illustrate the
performance and usefulness of the proposed distribution family when compared to natural competitors.

To this aim, real data from several domains are used.
In all cases these data have been analyzed in other researches and, in this paper, are fitted using members of the EBXIID family.
This allows the immediate comparison of our results with respect to the ones of competitors.
These comparisons are done using the following two well-known measures of the relative quality of statistical models for given data: the Akaike information criterion (AIC),
defined by $2m -2l$ where $m$ is the number of parameters
of the model, and the Bayesian information criterion (BIC), also called the Schwarz information criterion, 
defined by $m \log n -2l$ with $n$ the sample size.
For both criteria, the lower the better.

Through all applications the BXII distribution, given by $g=g_0$, is always included in order to compare other members of the EBXIID family with respect to this distribution.
The parameters of the members of the EBXIID family are always estimated using the procedure of maximum likelihood described in Section \ref{sec3}.

Parameter estimates and standard errors are computed using the function \texttt{optimx}.

\subsection{Strengths Data}

In the first application, the sample I of strengths reported by Smith and Naylor in \cite{SmithNaylor1987b} is modeled.
These authors analyzed these data using the Weibull distribution.
These data consisted in samples of experimental data of the strength of glass fibres of length 1.5 cm, from the National Physical Laboratory in England.
In this sample, Smith and Naylor did not give the unit of measurement.

The procedure of maximum likelihood described in Section \ref{sec3} for estimating the parameters of members of the EBXIID family is applied.
Following this procedure, when considering the analyzed fatigue data, the maximum likelihood estimates presented in 
Table \ref{FatigueGELS} are obtained.
Standard errors computed from the observed Hessian matrix in brackets.
Note that some parameters have been fixed \emph{a priori} or have been related each other in order to retain simple models.

\begin{table}[!h]
\centering
\begin{tabular}{ccccccc}
\hline
$g$ & $a$ & $b$ & $c$ & $d$ & $\epsilon$ & $p$ \\
\hline
$g_0$ & $4.040_{(3.339)}$ & $10.641_{(1.760)}$ & $0.346_{(0.036)}$ & 1 (fixed) & \\
$g_1$ $^{(1)}$ & $2.728_{(0.142)}$ & $2.728_{(0.142)}$ & $0.282_{(0.027)}$ & 2 (fixed) & $2.728_{(0.142)}$ & 0 (fixed) \\
$g_2$ $^{(2)}$ & $1.096_{(0.538)}$ & $2.591_{(0.306)}$ & $0.273_{(0.073)}$ & 2 (fixed) & 1 (fixed) & $2.591_{(0.306)}$ \\
$g_3$ $^{(3)}$ & $8.469_{(0.866)}$ & $8.469_{(0.866)}$ & $0.370_{(0.007)}$ & 1 (fixed) & 1 (fixed) \\
\hline
\multicolumn{7}{l}{$^{(1)}$ Taken $a=b=\epsilon$} \\
\multicolumn{7}{l}{$^{(2)}$ Taken $b=p$} \\
\multicolumn{7}{l}{$^{(3)}$ Taken $a=b$}
\end{tabular}
\caption{Fit of strengths data using members of the EBXIID family}
\label{FatigueGELS}
\end{table}

The fatigue data examined are popular. They have been studied by several authors.
For instance, Smith and Naylor \cite{SmithNaylor1987b} fitted them using a Weibull model.
Next, skew $t$-distributions have been used to fit these data as done by Jones and Faddy \cite{JonesFaddy2003} and more recently Baker \cite{Baker2016} with his generalized asymmetric (GAT) distribution.
Also, Morais and Barreto-Souza \cite{MoraisBarretoSouza2011} examined these data using the Weibull Poisson (WP), Rayleigh Poisson (RP), and exponential Poisson (EP) distributions.
On the other hand, Barreto et al. \cite{BarretoSouzaSantosCordeiro2010} applied to these data the beta generalized exponential (BGE) and the beta exponential (BE) distributions; Jones and Pewsey \cite{JonesPewsey2009} modeled them using the sinh-arcsinh (SHASH) distribution and its normal (SHASH-N), normal-tailed (SHASH-NT) and symmetric (SHASH-S) sub-models; and,
Ma and Genton \cite{MaGenton2004} fitted them using flexible generalized skew-normal (FGSN) distributions by varying a parameter $K$.
Table \ref{FatigueTab} collects the AIC and BIC values reported by the above-cited authors on the models mentioned above.
In such a table the AIC and BIC values of the new models indicated in Table \ref{FatigueGELS} are included.
The number of model parameters, $w$, and log-likelihoods are also included in that table.
These results show that the decision for choosing a model varies according to the parsimony criterion one. 
On the one hand, the AIC favors the SHASH distribution, whereas, on the other hand, the BIC the member of the EBXIID family with $g=g_3$.
All these models are highlighted.
The preference for this last model reveals that data would be heavy-tailed, but with a high shape parameter given by the tail index, more than 64.
This puts in evidence why light-tailed models like $g_2$ that presents the second lowest for both AIC and BIC values may also give good fits.
From these results, it is noted that the performance of the BXII distribution that is a good model alternative can be improved by some members of the EBXIID family.

\begin{table}[!h]
\centering
\begin{tabular}{lcccc}
\hline
\multicolumn{1}{c}{Model} & $w$ & $-l$ & AIC & BIC \\
\hline
BE & 4 & 24.12 & 54.24 & 60.03 \\
BGE & 4 & 15.59 & 39.18 & 46.90 \\
EBXIID family ($g_0$) & 3 & 12.70 & 31.41 & 37.20 \\
EBXIID family ($g_1$) & 2 & 14.57 & 33.14 & 37.00 \\
EBXIID family ($g_2$) & 3 & 11.22 & 28.44 & 34.24 \\
EBXIID family ($g_3$) & 2 & 12.95 & 29.91 & $\mathbf{33.78}$ \\
EP & 2 & 30.51 & 65.02 & 68.88 \\
FGSN ($K=1$) & 4 & 11.95 & 31.90 & 40.40 \\
FGSN ($K=3$) & 5 & 11.60 & 33.20 & 43.90 \\
GAT & 4 & 11.75 & 31.50 & 39.22 \\
RP & 2 & 20.81 & 45.62 & 49.48 \\
SHASH & 4 & 10.00 & $\mathbf{28.00}$ & 36.57 \\
SHASH-N & 2 & 17.92 & 39.84 & 44.13 \\
SHASH-NT & 3 & 13.63 & 33.26 & 39.69 \\
SHASH-S & 3 & 13.46 & 32.92 & 39.35 \\
Weibull & 3 & 14.55 & 35.11 & 40.90 \\
WP & 3 & 13.47 & 32.94 & 38.73 \\
\hline
\end{tabular}
\caption{Fatigue data: log-likelihoods and AIC and BIC values}
\label{FatigueTab}
\end{table}

\subsection{Rainfall Data}

In the second application, rainfall data reported by Chen et al. \cite{ChenBunceJiang2010} are examined.
These observations measured in mm are annual maximum antecedent rainfalls of a 60-day duration taken from Maple Ridge in British Columbia,
Canada.

Following the maximum likelihood method described in Section \ref{sec3} for estimating the parameters of members of the EBXIID family, for the studied rainfall data, the maximum likelihood estimates presented in 
Table \ref{RainfallGELS} are obtained.
Standard errors computed from the observed Hessian matrix in brackets.
Some of these models have been kept simple by fixing parameters.

\begin{table}[!h]
\centering
\begin{tabular}{ccccccc}
\hline
$g$ & $a$ & $b$ & $c$ & $d$ & $\epsilon$ & $p$ \\
\hline
$g_0$ & $4.719_{(1.337)}$ & $5.346_{(0.823)}$ & $0.001_{(0.0001)}$ & 1 (fixed) & \\
$g_1$ & $5.736_{(6.802)}$ & $6.964_{(1.158)}$ & $0.007_{(0.001)}$ & 1 (fixed) & 1 (fixed) & 0 (fixed) \\
$g_2$ & $6.776_{(0.688)}$ & $6.776_{(0.688)}$ & $0.006_{(0.0002)}$ & $1$ (fixed) & 1 (fixed) & $0$ (fixed) \\
$g_3$ $^{(1)}$ & $16.658_{(1.653)}$ & $16.658_{(1.653)}$ & $0.100_{(0.086)}$ & $0.100_{(0.086)}$ & $0.330_{(0.001)}$ \\
\hline
\multicolumn{7}{l}{$^{(1)}$ Taken $a=b$ and $c=d$}
\end{tabular}
\caption{Fit of rainfall data using members of the EBXIID family}
\label{RainfallGELS}
\end{table}

The rainfall data considered in this analysis were analyzed by Tahir et al. \cite{TahirCordeiroAlzaatrehMansoorZubair2014} by using the logistic-Fr\'{e}chet (LFr) distribution introduced by these authors.
They also fitted those data to the Marshall-Olkin Fr\'{e}chet (MOFr) distribution introduced by Krishna et al. \cite{KrishnaJoseAliceRistic2013} and Krishna et al. \cite{KrishnaJoseRistic2013}, the exponentiated-Fr\'{e}chet (EFr) introduced by Nadarajah and Kotz \cite{NadarajahKotz2003}, and the Fr\'{e}chet (Fr) distribution.

The AIC and BIC values of all these models reported by the above-cited authors, AIC and BIC values of the members of the EBXIID family analyzed in this paper, and their corresponding log-likelihoods and parameter numbers, $w$, are collected in Table \ref{RainfallTab}.
Among all these models, both the AIC and BIC values favor the ones of the EBXIID family, and among them the distribution with $g=g_2$.
These results seem to show that data are heavy-tailed, but with a high shape parameter, more than 40.
Note that in this case the distributions with $g$ as $g_1$, $g_2$ or $g_3$, all of them show better performances than the BXII distribution.

\begin{table}[!h]
\centering
\begin{tabular}{lcccc}
\hline
\multicolumn{1}{c}{Model} & $w$ & $-l$ & AIC & BIC \\
\hline
EBXIID family ($g_0$) & 3 & 326.68 & 659.37 & 665.23 \\
EBXIID family ($g_1$) & 3 & 324.97 & 655.95 & 661.81 \\
EBXIID family ($g_2$) & 2 & 325.03 & $\mathbf{654.06}$ & $\mathbf{657.97}$ \\
EBXIID family ($g_3$) & 3 & 325.17 & 656.35 & 662.20 \\
EFr & 3 & 328.11 & 662.23 & 668.09 \\
Fr & 2 & 341.15 & 686.31 & 690.21 \\
LFr & 3 & 327.19 & 660.38 & 666.23 \\
MOFr & 3 & 327.22 & 660.45 & 666.30 \\
\hline
\end{tabular}
\caption{Rainfall data: log-likelihoods and AIC and BIC values}
\label{RainfallTab}
\end{table}

\subsection{Roller Data}

In the third application, data on surface roughness of rollers reported by Laslett \cite{Laslett1994} are examined.
They correspond to the second subset of data used in \cite{Laslett1994}.
The data are available for downloading at \url{http://lib.stat.cmu.edu/jasadata/laslett}.
In that site the following background information is provided: ``The dataset consists of 1150 heights measured at 1 micron intervals along
 the drum of a roller (i.e. parallel to the axis of the roller). 
This was part of an extensive study of surface roughness of the rollers. The units of height are not 
 given, because the data are automatically rescaled as they are recorded, and the
 scaling factor is imperfectly known. The zero reference height is arbitrary.''

Applying the maximum likelihood method explained in Section \ref{sec3} for estimating the parameters of members of the EBXIID family, for the studied rainfall data, the obtained estimates are presented in 
Table \ref{RollerGELS}.
Standard errors computed from the observed Hessian matrix in brackets.
Parameters in some models have been fixed in order to keep them simple.

\begin{table}[!h]
\centering
\begin{tabular}{ccccccc}
\hline
$g$ & $a$ & $b$ & $c$ & $d$ & $\epsilon$ & $p$ \\
\hline
$g_0$ & $20.364$ & $16.339_{(1.797)}$ & $0.101_{(0.010)}$ & $4.376_{(0.892)}$ & 1066 \\
$g_1$ $^{(1)}$ & $6.606_{(1.110)}$ & $2.650_{(0.243)}$ & $0.008_{(0.007)}$ & $1.325_{(0.143)}$ & $2.650_{(0.243)}$ & $0.650_{(0.686)}$ \\
$g_2$ $^{(2)}$ & $2.639_{(0.023)}$ & $1.319_{(0.013)}$ & $0.003_{(0.0001)}$ & $1.319_{(0.013)}$ & 1 (fixed) & $2.639_{(0.023)}$ \\
$g_3$ & $18.527_{(4.901)}$ & $3.609_{(0.272)}$ & $0.032_{(0.008)}$ & $5.765$ & $1.933_{(0.145)}$ \\
\hline
\multicolumn{7}{l}{$^{(1)}$ Taken $b=2d=\epsilon$} \\
\multicolumn{7}{l}{$^{(2)}$ Taken $a=2b=2d=p$}
\end{tabular}
\caption{Fit of roller data using members of the EBXIID family}
\label{RollerGELS}
\end{table}

The roller data considered in this study have been usually treated as Gaussian, see e.g. \cite{ConstantineHall1994,KentWood1997,BlankeVial2014}.
Recently, Balakrishnan et al. \cite{BalakrishnanSauloLeao2017} analyzed those data using the skew exponential-power
Birnbaum-Saunders (SEPBS) distribution and the skew Student-t Birnbaum-Saunders (StBS) distribution that are based on the Birnbaum-Saunders (BS) distribution. As explained by these authors, the BS distribution is related to the normal distribution through a stochastic representation, but presenting a positive skew.
These authors also fitted those data using the BS, log-normal and gamma distributions.

Table \ref{RollerTab} shows both the AIC and BIC values reported by the above-cited authors on the models above mentioned and also the ones associated to the members of the EBXIID family studied in this paper.
Model parameter numbers and log-likelihoods are also included in that table.
According to those parsimony criteria, the distribution with $g=g_2$ would be the favorite model.
This result is in line with typical assumptions on the data roller where they are considered light-tailed \cite{ConstantineHall1994,KentWood1997,BlankeVial2014}.

\begin{table}[!h]
\centering
\begin{tabular}{lcccc}
\hline
\multicolumn{1}{c}{Model} & $w$ & $-l$ & AIC & BIC \\
\hline
EBXIID family ($g_0$) & 4 & 1066.96 & 2141.93 & 2162.12 \\
EBXIID family ($g_1$) & 4 & 1093.00 & 2194.00 & 2214.19 \\
EBXIID family ($g_2$) & 2 & 1060.48 & $\mathbf{2124.96}$ & $\mathbf{2135.06}$ \\
EBXIID family ($g_3$) & 5 & 1091.23 & 2192.46 & 2217.70 \\
Log-normal & 2 & 1375.06 & 2754.12 & 2764.21 \\
Gamma & 2 & 1268.92 & 2541.84 & 2551.93 \\
BS & 2 & 1438.13 & 2880.26 & 2890.35 \\
S$t$BS & 3 & 1059.92 & 2125.84 & 2140.98 \\
SCNBS & 3 & 1069.10 & 2144.20 & 2159.34 \\
\hline
\end{tabular}
\caption{Roller data: log-likelihoods and AIC and BIC values}
\label{RollerTab}
\end{table}

An interesting feature of the roller data examined is that they show a negative skewness.
However, the above-mentioned distributions are all right-skewed.
In order to relate in a better way both data and the previous models, a transformation of data is analyzed.
It consists in to consider the observations $6-x$ where $x$ is an observed height belonging to the roller data.
Since the original data are lower than 6, the new data are always positive.

Considering the new data, all models presented in Table \ref{RollerTab} have been fitted to them.
Parameter estimates of most of models have been computed using the function \texttt{optimx}, excepting for the log-normal and gamma distributions.
For these two last cases the function \texttt{fitdist} has been used.
Table \ref{RollerGELS2} presents these estimates.
The parameter symbols given in \cite{BalakrishnanSauloLeao2017} have been adopted.

\begin{table}[!h]
\centering
\begin{tabular}{ccccccc}
\hline
$g$ & $a$ & $b$ & $c$ & $d$ & $\epsilon$ & $p$ \\
\hline
$g_0$ & 1 (fixed) & $7.021_{(0.173)}$ & $0.419_{(0.003)}$ & 0 (fixed) & \\
$g_1$ $^{(1)}$ & 1 (fixed) & $4.033_{(0.099)}$ & $0.195_{(0.001)}$ & 2 (fixed) & 2 (fixed) & $0.195_{(0.001)}$ \\
$g_2$ $^{(1)}$ & $0.683_{(0.065)}$ & $17.701_{(0.918)}$ & $0.484_{(0.005)}$ & 1 (fixed) & 1 (fixed) & $0.484_{(0.005)}$ \\
$g_3$ $^{(2)}$ & 1 (fixed) & $2.685_{(0.032)}$ & $0.087_{(0.002)}$ & $2.685_{(0.032)}$ & $2.685_{(0.032)}$ \\
\hline
\multicolumn{7}{l}{$^{(1)}$ Taken $c=p$} \\
\multicolumn{7}{l}{$^{(2)}$ Taken $b=d=\epsilon$}
\end{tabular}
\begin{tabular}{lrcccc}
\multicolumn{6}{c}{} \\
\hline
Model & \multicolumn{1}{c}{$\alpha$} & $\beta$ & $\lambda$ & $\nu$ & $\gamma$ \\
\hline
Log-normal & $0.869_{(0.007)}$ & $0.256_{(0.005)}$ \\
Gamma & $15.300_{(0.631)}$ & $6.206_{(0.260)}$ \\
BS & $0.259_{(0.005)}$ & $2.385_{(0.018)}$ \\
S$t$BS & $0.226_{(0.008)}$ & $2.383_{(0.017)}$ & $8.282_{(2.137)}$ & 0 (fixed) \\
SCNBS & $0.119_{(0.020)}$ & $2.363_{(0.045)}$ & $0.734_{(0.072)}$ & $0.020_{(0.046)}$ & $0.405_{(0.060)}$ \\
\hline
\end{tabular}
\caption{Fit of transformed roller data using models indicated in Table \ref{RollerTab}}
\label{RollerGELS2}
\end{table}

For the new fitted models, their AIC and BIC values, log-likelihoods and model parameter numbers are shown in 
Table \ref{RollerTab2}.
Comparing Tables \ref{RollerTab} and \ref{RollerTab2} some notorious differences appear.
First, all models now show better performances, which is in line with the previous observation that right-skewed distributions should fit right-skewed data, motivating data transformation.
Next, the favorite models have changed, but this selection varying according to the parsimony criteria.
Considering AIC values the recommended model would be the SCNBS distribution, whereas under BIC values would be the new model with $g=g_1$.
These two alternatives are contrary with respect to the heaviness of the distribution tails.
The first model being light-tailed and the second heavy-tailed.
Further, other heavy-tailed models show BIC values that are close to the one of the new model with $g=g_1$, so the new model with $g=g_0$ or $g=g_3$.

\begin{table}[!h]
\centering
\begin{tabular}{lcccc}
\hline
\multicolumn{1}{c}{Model} & $w$ & $-l$ & AIC & BIC \\
\hline
EBXIID family ($g_0$) & 2 & 1058.15 & 2120.31 & 2130.41 \\
EBXIID family ($g_1$) & 2 & 1058.08 & 2120.16 & $\mathbf{2130.26}$ \\
EBXIID family ($g_2$) & 3 & 1058.43 & 2122.86 & 2138.00 \\
EBXIID family ($g_3$) & 2 & 1058.14 & 2120.29 & 2130.39 \\
Log-normal & 2 & 1075.40 & 2154.81 & 2164.90 \\
Gamma & 2 & 1067.43 & 2138.87 & 2148.96 \\
BS & 2 & 1069.50 & 2143.01 & 2153.10 \\
S$t$BS & 3 & 1059.92 & 2124.44 & 2139.58 \\
SCNBS & 5 & 1053.33 & $\mathbf{2116.67}$ & 2141.90 \\
\hline
\end{tabular}
\caption{Transformed roller data: log-likelihoods and AIC and BIC values}
\label{RollerTab2}
\end{table}

\section{Discussion and Conclusion}
\label{sec6}

In this paper an extension of the Burr type XII (BXII) distribution has been proposed, taking advantage of the ability of this model for fitting large kinds of data.
This extension $F$ has great flexibility for fitting data thanks to the incorporation of a function $g$ that satisfies suitables conditions for guaranteeing $F$ is a probability distribution function.
The family of all these distributions has been called the extended Burr type XII distribution (EBXIID) family.
Analysis of some members of this family has shown that they can fit not only heavy-tailed data as the BXII distribution, but also light-tailed data.
Statistical properties of members of this family has been described.
Maximum likelihood estimation method has been proposed for estimating parameters of members of the EBXIID family.
Monte Carlo simulations keeping fixed some parameters have shown good performance of the parameter estimation method proposed.
Considering real data sets coming from diverse fields, some members of the EBXIID family have shown good performance with respect to competitors.

In this paper only a few members of the EBXIID family have been explored, but this family is very large and thus other members might be candidates for modeling specific data sets.
In a forthcoming paper, another model from this family will be explored in order to give a competitive representation of a particular real data set that have been analyzed with a generalization of the power law distribution with nonlinear exponent.
Also, in another forthcoming analysis, a well-known data set that have been modeled using different distributions, composite distributions among them, will be fitted using another member of this family.



\appendix

\section{Proofs}
\label{Proofs}

\begin{proof}[Proof of Proposition \ref{prop:20170220}:]
It is enough to note that
$$
\barF'(x)=-a\big(1+g(x)\big)^{-a-1}g'(x)\leq0,
$$
i.e. $\barF$ is decreasing; next,
$$
\lim_{x\to d^{+}}\barF(x)=\lim_{x\to d^{+}}\big(1+g(x)\big)^{-a}\leq1,
$$
and
$$
\lim_{x\to \infty}\barF(x)=\lim_{x\to \infty}\big(1+g(x)\big)^{-a}=0
$$
since $g(x)\to\infty$ as $x\to\infty$.
\end{proof}

\begin{proof}[Proof of Proposition \ref{PropMode}:]
Since $g$ defined in Table \ref{tab02} is twice differentiable, then $F$ defined in (\ref{defF}) is also twice differentiable.
Hence, the maximum $x_m$ of $F'$ ($\geq0$) is reached either in $(0,\infty)$ or at $x=0$, see e.g. \cite{RobertAAdams}.
In the first case, $x_m$ satisfies $F''(x_m)=0$, which implies, after straightforward computations, that
$$
(a+1)\big[g'(x_m)\big]^2=(1+g(x_m))g''(x_m).
$$
In the second case, we have $x_m=0=\inf\textrm{support}\,(g)$.
\end{proof}

\section{Relations For Estimating Parameters}
\label{RelationsForEstimatingParameters}

\subsection{$g_0$}

Conditions and information matrix for point and interval estimates for the parameters of the BII distribution can be found in e.g. \cite{WangKeatsZimmer1996}.

\subsection{$g_1$}

Let 
\begin{eqnarray*}
A(x) & := & \frac{\epsilon}{x}+\frac{p}{x+d}-\frac{1}{(x+d)\,\log(x+d)} \\
B(x) & := & 1-(a+1)\,\frac{g_1(x)}{1+g_1(x)} \\
C(x) & := & \frac{g_1(x)}{1+g_1(x)} \\
D(x) & := & p-\frac{1}{\log(x+d)} \\
E(x) & := & p-\frac{1+\log(x+d)}{\log^2(x+d)}.
\end{eqnarray*}

\begin{eqnarray*}
\frac{\partial l}{\partial a} & = & \sum_{k=1}^n\left(\frac{1}{a}-\log\big(1+g_1(x_k)\big)\right)\quad =\quad 0 \\
\frac{\partial l}{\partial b} & = & \frac{1}{b}\sum_{k=1}^n \left(1+B(x_k)\,\log g_1(x_k)\right)\quad =\quad 0 \\
\frac{\partial l}{\partial c} & = & \frac{b}{c}\sum_{k=1}^n B(x_k)\quad =\quad 0 \\
\frac{\partial l}{\partial d} & = & \sum_{k=1}^n\frac{1}{x_k+d}\,\left[b\,B(x_k)\,D(x_k)-\frac{E(x_k)}{(x_k+d)\,A(x_k)}\right]\quad =\quad 0 \\
\frac{\partial l}{\partial \epsilon} & = & \sum_{k=1}^n\left[b\,\log(x_k)\,B(x_k)+\frac{1}{x_k\,A(x_k)}\right]\quad =\quad 0 \\
\frac{\partial l}{\partial p} & = & \sum_{k=1}^n\left[b\,\log(x_k+d)\,B(x_k)+\frac{1}{(x_k+d)\,A(x_k)}\right]\quad =\quad 0
\end{eqnarray*}

\begin{eqnarray*}
\frac{\partial^2 l}{\partial a^2} & = & -\sum_{k=1}^n \frac{1}{a^2} \\
\frac{\partial^2 l}{\partial b\partial a} & = & -\frac{1}{b}\sum_{k=1}^n C(x_k)\,\log g_1(x_k) \\
\frac{\partial^2 l}{\partial c\partial a} & = & -\frac{b}{c}\sum_{k=1}^n C(x_k) \\
\frac{\partial^2 l}{\partial d\partial a} & = & -b\sum_{k=1}^n \frac{C(x_k)}{x_k+d}\,D(x_k) \\
\frac{\partial^2 l}{\partial \epsilon\partial a} & = & -b\sum_{k=1}^n C(x_k)\,\log x_k \\
\frac{\partial^2 l}{\partial p\partial a} & = & -b\sum_{k=1}^n C(x_k)\,\log (x_k+d) \\
\frac{\partial^2 l}{\partial b^2} & = & -\frac{1}{b^2}\sum_{k=1}^n \left[1+(a+1)\,\frac{C(x_k)\,\log^2 g_1(x_k)}{1+g_1(x_k)}\right] \\
\frac{\partial^2 l}{\partial c\partial b} & = & \frac{1}{c}\sum_{k=1}^n \left[1-(a+1)\,C(x_k)\,\left(1+\frac{\log g_1(x_k)}{1+g_1(x_k)}\right)\right] \\
\frac{\partial^2 l}{\partial d\partial b} & = & \sum_{k=1}^n \frac{D(x_k)}{x_k+d}\,\left[1-(a+1)\,C(x_k)\,\left(1+\frac{\log g_1(x_k)}{1+g_1(x_k)}\right)\right] \\
\frac{\partial^2 l}{\partial \epsilon\partial b} & = & \sum_{k=1}^n \log(x_k)\,\left[1-(a+1)\,C(x_k)\,\left(1+\frac{\log g_1(x_k)}{1+g_1(x_k)}\right)\right] \\
\frac{\partial^2 l}{\partial p\partial b} & = & \sum_{k=1}^n \log(x_k+d)\,\left[1-(a+1)\,C(x_k)\,\left(1+\frac{\log g_1(x_k)}{1+g_1(x_k)}\right)\right] \\
\frac{\partial^2 l}{\partial c^2} & = & -\frac{b}{c^2}\sum_{k=1}^n \left[1-(a+1)\,C(x_k)\,\left(1-\frac{b}{1+g_1(x_k)}\right)\right] \\
\frac{\partial^2 l}{\partial d\partial c} & = & -\frac{(a+1)\,b^2}{c}\sum_{k=1}^n \frac{C(x_k)\,D(x_k)}{(x_k+d)\,\big(1+g_1(x_k)\big)} \\
\frac{\partial^2 l}{\partial \epsilon\partial c} & = & -\frac{(a+1)\,b^2}{c}\sum_{k=1}^n \frac{C(x_k)\,\log x_k}{\big(1+g_1(x_k)\big)} \\
\frac{\partial^2 l}{\partial p\partial c} & = & -\frac{(a+1)\,b^2}{c}\sum_{k=1}^n \frac{C(x_k)\,\log (x_k+d)}{\big(1+g_1(x_k)\big)} \\
\frac{\partial^2 l}{\partial d^2} & = & -\sum_{k=1}^n \frac{1}{(x_k+d)^2}\left\{b\,B(x_k)\,E(x_k)+(a+1)\,b\,\frac{C(x_k)\,D^2(x_k)}{1+g_1(x_k)}+\frac{E^2(x_k)}{(x_k+d)^2\,A^2(x_k)}\right] \\
\frac{\partial^2 l}{\partial \epsilon\partial d} & = & -\sum_{k=1}^n \left[(a+1)\,b^2\frac{C(x_k)\,D(x_k)\,\log x_k}{(x_k+d)\,\big(1+g_1(x_k)\big)}-\frac{E(x_k)}{x_k\,(x_k+d)^2\,A^2(x_k)}\right] \\
\frac{\partial^2 l}{\partial p\partial d} & = & -\sum_{k=1}^n \left[-b\,\frac{B(x_k)}{x_k+d}+(a+1)\,b^2\frac{C(x_k)\,D(x_k)\,\log (x_k+d)}{(x_k+d)\,\big(1+g_1(x_k)\big)}+\frac{E(x_k)}{(x_k+d)^3\,A^2(x_k)}+\frac{1}{(x_k+d)^2\,A(x_k)}\right] \\
\frac{\partial^2 l}{\partial \epsilon^2} & = & -\sum_{k=1}^n \left[(a+1)\,b^2\,\frac{C(x_k)\,\log^2 x_k}{1+g_1(x_k)}+\frac{1}{x_k^2A^2(x_k)}\right] \\
\frac{\partial^2 l}{\partial p\partial\epsilon} & = & -\sum_{k=1}^n\left[(a+1)\,b^2\,\frac{C(x_k)\,\log x_k\,\log (x_k+d)}{1+g_1(x_k)}+\frac{1}{x_k\,(x_k+d)\,A^2(x_k)}\right] \\
\frac{\partial^2 l}{\partial p^2} & = & -\sum_{k=1}^n\left[(a+1)b^2\frac{C(x_k)\,\log^2 (x_k+d)}{1+g_1(x_k)}+\frac{1}{(x_k+d)^2\,A^2(x_k)}\right]
\end{eqnarray*}

\subsection{$g_2$}

Let
\begin{eqnarray*}
A(x) & := & b\left(\frac{\epsilon}{x}-\frac{1}{(x+d)\,\log(x+d)}\right)+p\,x^{p-1} \\
B(x) & := & 1-(a+1)\,\frac{g_2(x)}{1+g_2(x)} \\
C(x) & := & \frac{g_2(x)}{1+g_2(x)} \\
D(x) & := & \log\left(\frac{c\,x^\epsilon}{\log(x+d)}\right) \\
E(x) & := & \frac{\epsilon}{x}-\frac{1}{(x+d)\,\log(x+d)} \\
F(x) & := & (x+d)\,\log(x+d) \\
G(x) & := & \frac{1+\log(x_k+d)}{(x+d)^2\,\log^2(x+d)}.
\end{eqnarray*}

\begin{eqnarray*}
\frac{\partial l}{\partial a} & = & \sum_{k=1}^n\left(\frac{1}{a}-\log\big(1+g_2(x_k)\big)\right)\quad =\quad 0 \\
\frac{\partial l}{\partial b} & = & \sum_{k=1}^n \left[D(x_k)\,B(x_k)+\frac{E(x_k)}{A(x_k)}\right]\quad =\quad 0 \\
\frac{\partial l}{\partial c} & = & \frac{b}{c}\sum_{k=1}^n B(x_k)\quad =\quad 0 \\
\frac{\partial l}{\partial d} & = & -b\sum_{k=1}^n \frac{1}{F(x_k)}\,\left[B(x_k)-\frac{G(x_k)}{A(x_k)}\right]\quad =\quad 0 \\
\frac{\partial l}{\partial \epsilon} & = & b\sum_{k=1}^n\left[B(x_k)\,\log(x_k)+\frac{1}{A(x_k)\,x_k}\right]\quad =\quad 0 \\
\frac{\partial l}{\partial p} & = & \sum_{k=1}^n x_k^{p}\,\left[B(x_k)\,\log x_k+\frac{1+p\,\log x_k}{x_k\,A(x_k)}\right]\quad =\quad 0
\end{eqnarray*}

\begin{eqnarray*}
\frac{\partial^2 l}{\partial a^2} & = & -\sum_{k=1}^n \frac{1}{a^2} \\
\frac{\partial^2 l}{\partial b\partial a} & = & -\sum_{k=1}^n C(x_k)\,D(x_k) \\
\frac{\partial^2 l}{\partial c\partial a} & = & -\frac{b}{c}\sum_{k=1}^n C(x_k) \\
\frac{\partial^2 l}{\partial d\partial a} & = & -b\sum_{k=1}^n \frac{C(x_k)}{F(x_k)} \\
\frac{\partial^2 l}{\partial \epsilon\partial a} & = & -b\sum_{k=1}^n C(x_k)\,\log x_k \\
\frac{\partial^2 l}{\partial p\partial a} & = & -\sum_{k=1}^n x_k^{p}\,C(x_k)\,\log x_k \\
\frac{\partial^2 l}{\partial b^2} & = & -\sum_{k=1}^n \left[(1+a)\,\frac{C(x_k)\,D^2(x_k)}{1+g_2(x_k)}+\frac{E^2(x_k)}{A^2(x_k)}\right] \\
\frac{\partial^2 l}{\partial c\partial b} & = & \frac{1}{c}\sum_{k=1}^n \left[B(x_k)-(1+a)\,b\,\frac{C(x_k)\,D(x_k)}{1+g_2(x_k)}\right] \\
\frac{\partial^2 l}{\partial d\partial b} & = & -\sum_{k=1}^n \frac{1}{F(x_k)}\,\left[B(x_k)-(1+a)\,b\,\frac{C(x_k)\,D(x_k)}{1+g_2(x_k)}+\frac{1+\log(x_k+d)}{A(x_k)\,F(x_k)}\,\left(b\,\frac{E(x_k)}{A(x_k)}-1\right)\right] \\
\frac{\partial^2 l}{\partial \epsilon\partial b} & = & \sum_{k=1}^n \left[\log(x_k)\,\left(B(x_k)-(1+a)\,\frac{\log(x_k)\,C(x_k)\,D(x_k)}{1+g_2(x_k)}\right)+\frac{1}{x_k\,A(x_k)}-\frac{E(x_k)}{x_k\,A^2(x_k)}\right] \\
\frac{\partial^2 l}{\partial p\partial b} & = & -\sum_{k=1}^n x_k^{p}\,\left[(1+a)\,\frac{C(x_k)\,D(x_k)\,\log x_k}{1+g_2(x_k)}+\frac{\big(1+p\,\log x_k\big)\,E(x_k)}{x_k\,A^2(x_k)}\right] \\
\frac{\partial^2 l}{\partial c^2} & = & -\frac{b}{c^2}\sum_{k=1}^n \left[B(x_k)+(1+a)\,b\,\frac{C(x_k)}{1+g_2(x_k)}\right] \\
\frac{\partial^2 l}{\partial d\partial c} & = & \frac{(1+a)\,b^2}{c}\sum_{k=1}^n \frac{C(x_k)}{(x_k+d)\,\big(1+g_2(x_k)\big)\,\log(x_k+d)} \\
\frac{\partial^2 l}{\partial \epsilon\partial c} & = & -\frac{(1+a)\,b^2}{c}\sum_{k=1}^n \frac{C(x_k)\,\log x_k}{1+g_2(x_k)} \\
\frac{\partial^2 l}{\partial p\partial c} & = & -\frac{(1+a)\,b}{c}\sum_{k=1}^n \frac{x_k^{p}\,C(x_k)\,\log x_k}{1+g_2(x_k)} \\
\frac{\partial^2 l}{\partial d^2} & = & b\sum_{k=1}^n \left[B(x_k)\,G(x_k)-(1+a)\,b\,\frac{C(x_k)}{F^2(x_k)\,\big(1+g_2(x_k)\big)}-b\,\frac{G^2(x_k)}{A^2(x_k)}+\frac{1}{A(x_k)\,(x_k+d)\,F^2(x_k)}-\frac{G^2(x_k)}{A(x_k)\,F(x_k)}\right] \\
\frac{\partial^2 l}{\partial \epsilon\partial d} & = & b^2\,\sum_{k=1}^n \left[(1+a)\,\frac{C(x_k)\,\log x_k}{\big(1+g_2(x_k)\big)\,F(x_k)}-\frac{G(x_k)}{x_k\,A^2(x_k)}\right] \\
\frac{\partial^2 l}{\partial p\partial d} & = & b\,\sum_{k=1}^n x_k^{p}\,\left[(1+a)\,\frac{C(x_k)\,\log x_k}{\big(1+g_2(x_k)\big)\,F(x_k)}-\frac{G(x_k)\,\big(1+\log(x_k+d)\big)}{x_k\,A^2(x_k)}\right] \\
\frac{\partial^2 l}{\partial \epsilon^2} & = & -b^2\sum_{k=1}^n \left[(1+a)\,\frac{C(x_k)\,\log^2 x_k}{1+g_2(x_k)}+\frac{1}{x_k^2A^2(x_k)}\right] \\
\frac{\partial^2 l}{\partial p\partial\epsilon} & = & -b\sum_{k=1}^n x_k^{p}\,\left[(a+1)\,\frac{C(x_k)\,\log^2 x_k}{1+g_2(x_k)}+\frac{1+p\,\log x_k}{x_k^2\,A^2(x_k)}\right] \\
\frac{\partial^2 l}{\partial p^2} & = & \sum_{k=1}^n x^{p}\left[\log^2 x_k\,\left(B(x_k)-(a+1)\,\frac{x_k^p\,C(x_k)}{1+g_2(x_k)}\right)+\frac{x_k^{p}\,\big(1+p\,\log x_k\big)^2}{x_k^2\,A^2(x_k)}+\frac{\big(2+\log x_k\big)\,\log x_k}{x_k\,A(x_k)}\right]
\end{eqnarray*}

\subsection{$g_3$}

Let
\begin{eqnarray*}
A(x) & := & \frac{\epsilon}{x}+B(x) \\
B(x) & := & \frac{1}{(x+d+1)\,\log(x+d+1)}-\frac{1}{(x+d)\,\log(x+d)} \\
C(x) & := & -\frac{\log(x+d+1)+1}{(x+d+1)^2\,\log^2(x+d+1)}-\frac{\log(x+d)+1}{(x+d)^2\,\log^2(x+d)} \\
D(x) & := & 1-(a+1)\,\frac{g_3(x)}{1+g_3(x)} \\
E(x) & := & \frac{g_3(x)}{1+g_3(x)} \\
F(x) & := & 2\,\frac{\big(1+\log x\big)^2}{x^3\,\log^3 x}-\frac{1}{x^3\,\log^2 x}.
\end{eqnarray*}

\begin{eqnarray*}
\frac{\partial l}{\partial a} & = & \sum_{k=1}^n\left(\frac{1}{a}-\log\big(1+g_3(x_k)\big)\right)\quad =\quad 0 \\
\frac{\partial l}{\partial b} & = & \frac{1}{b}\,\sum_{k=1}^n \left[1+D(x_k)\,\log g_3(x_k)\right]\quad =\quad 0 \\
\frac{\partial l}{\partial c} & = & \frac{b}{c}\sum_{k=1}^n D(x_k)\quad =\quad 0 \\
\frac{\partial l}{\partial d} & = & \sum_{k=1}^n\left[b\,B(x_k)\,D(x_k)-\frac{C(x_k)}{A(x_k)}\right]\quad =\quad 0 \\
\frac{\partial l}{\partial \epsilon} & = & \sum_{k=1}^n\left[b\,D(x_k)\,\log(x_k)+\frac{1}{x_k\,A(x_k)}\right]\quad =\quad 0
\end{eqnarray*}

\begin{eqnarray*}
\frac{\partial^2 l}{\partial a^2} & = & -\sum_{k=1}^n \frac{1}{a^2} \\
\frac{\partial^2 l}{\partial b\partial a} & = & -\frac{1}{b}\sum_{k=1}^n E(x_k)\,\log g_3(x_k) \\
\frac{\partial^2 l}{\partial c\partial a} & = & -\frac{b}{c}\sum_{k=1}^n E(x_k) \\
\frac{\partial^2 l}{\partial d\partial a} & = & -b\sum_{k=1}^n E(x_k)\,B(x_k) \\
\frac{\partial^2 l}{\partial \epsilon\partial a} & = & -b\sum_{k=1}^n E(x_k)\,\log x_k \\
\frac{\partial^2 l}{\partial b^2} & = & -\frac{1}{b^2}\sum_{k=1}^n \left[1+(a+1)\,\frac{E(x_k)\,\log^2g_3(x_k)}{1+g_3(x_k)}\right] \\
\frac{\partial^2 l}{\partial c\partial b} & = & \frac{1}{c}\sum_{k=1}^n \left[D(x_k)-(a+1)\,\frac{E(x_k)\,\log g_3(x_k)}{1+g_3(x_k)}\right] \\
\frac{\partial^2 l}{\partial d\partial b} & = & \sum_{k=1}^n B(x_k)\,\left[D(x_k)-(a+1)\,\frac{E(x_k)\,\log g_3(x_k)}{1+g_3(x_k)}\right] \\
\frac{\partial^2 l}{\partial \epsilon\partial b} & = & \sum_{k=1}^n \log x_k\,\left[D(x_k)-(a+1)\,\frac{E(x_k)\,\log g_3(x_k)}{1+g_3(x_k)}\right] \\
\frac{\partial^2 l}{\partial c^2} & = & -\frac{b}{c^2}\sum_{k=1}^n \left[D(x_k)+(a+1)\,b\,\frac{E(x_k)}{1+g_3(x_k)}\right] \\
\frac{\partial^2 l}{\partial d\partial c} & = & -\frac{(a+1)\,b^2}{c}\sum_{k=1}^n \frac{E(x_k)\,B(x_k)}{1+g_3(x_k)} \\
\frac{\partial^2 l}{\partial \epsilon\partial c} & = & -\frac{(a+1)\,b^2}{c}\sum_{k=1}^n \frac{E(x_k)\,\log x_k}{1+g_3(x_k)} \\
\frac{\partial^2 l}{\partial d^2} & = & \sum_{k=1}^n \left[b\,C(x_k)\,D(x_k)-(a+1)\,b^2\frac{E(x_k)\,B^2(x_k)}{1+g_3(x_k)}-\frac{C^2(x_k)}{A^2(x_k)}+\frac{F(x_k+d+1)-F(x_k+d)}{A(x_k)}\right] \\
\frac{\partial^2 l}{\partial \epsilon\partial d} & = & -\sum_{k=1}^n \left[(a+1)\,b^2\frac{E(x_k)\,B(x_k)\,\log g_3(x_k)}{1+g_3(x_k)}+\frac{C(x_k)}{x_k\,A^2(x_k)}\right] \\
\frac{\partial^2 l}{\partial \epsilon^2} & = & -\sum_{k=1}^n \left[(a+1)\,b^2\,\frac{E(x_k)\,\log^2 x_k}{1+g_3(x_k)}+\frac{1}{x^2_k\,A^2(x_k)}\right]
\end{eqnarray*}

\end{document}